\tikzstyle{citation}=[->,shorten >=1pt]
\tikzstyle{mergeable}=[very thick]
\renewcommand*{\backref}[1]{}
\renewcommand*{\backrefalt}[4]{%
\ifcase #1%
\marginpar{\tiny no cite}
\or
 Cited on p.~#2.%
\else
  Cited on pp.~#2.%
\fi
}
\newcommand{\true}{\texttt{true}}
\newcommand{\false}{\texttt{false}}
\newcommand{\dsaia}{\textsf{ai10-2011}}
\newcommand{\dsaib}{\textsf{ai10-2013}}
\newcommand{\dsijcai}{\textsf{ijcai-2013}}
\newcommand{\mmgreedymax}{\textsf{GreedyMax}}
\newcommand{\mmgreedymin}{\textsf{GreedyMin}}
\newcommand{\mmmax}{\textsf{Maximum}}
\newcommand{\mmramsey}{\textsf{Ramsey}}
\newcommand{\decprob}[3]{%
  \par\medskip
  \noindent \textsc{#1}
%  \par\noindent\hangindent=\parindent  \textbf{Input:}  #2
  \par\noindent\hangindent=\parindent \textbf{Question:}  #3
  \par  \medskip
}
\newcommand\degi{\text{deg}^{\text{in}}}
\newcommand\Ni{N^{\text{in}}}
\newcommand{\hind}{h\nobreakdash-index}
\newcommand{\hinds}{h\nobreakdash-indices}
\newcommand{\Hind}{h\nobreakdash-Index}
\newcommand{\hindM}{\textsc{\hind{} Manipulation}}
\newcommand{\hindS}{\textsc{Dividing}}
\newcommand{\hindE}{\textsc{Ex\-trac\-ting}}
\newcommand{\hindA}{\textsc{Atomizing}}
\newcommand{\hindkS}{\textsc{Cautious \hindS{}}}
\newcommand{\hindkE}{\textsc{Cautious \hindE{}}}
\newcommand{\hindlE}{\textsc{Con\-ser\-va\-tive \hindE{}}}
\newcommand{\hindlS}{\textsc{Con\-ser\-va\-tive \hindS{}}}
\newcommand{\hindlA}{\textsc{Con\-ser\-va\-tive \hindA{}}}
\newcommand{\merge}{\ensuremath{\mathcal P}}
\newcommand{\mergelt}{\ensuremath{P}}
\newcommand{\unmerge}{\ensuremath{\mathcal R}}
\newcommand{\unmergelt}{\ensuremath{R}}
\newtheorem{theorem}{Theorem}
\theoremstyle{definition}
\newtheorem{corollary}{Corollary}
\newtheorem{proposition}{Proposition}
\DeclareMathOperator\scites{sumCite}
\DeclareMathOperator\ucites{unionCite}
\DeclareMathOperator\mcites{fusionCite}
\DeclareMathOperator*{\argmax}{arg\,max}
\def\NAT@spacechar{~}% NEW
\title{h-Index Manipulation by Undoing Merges\thanks{An extended abstract of this article appeared in the proceedings of the 22nd European Conference on Artificial Intelligence (ECAI~'16)~\citep{BKM+16}.
This full version contains additional and corrected experimental results and strengthened hardness results (\cref{thm:hindANP}). The following errors in the previously performed experiments were corrected: 
(1) The algorithm (\mmramsey) for generating initially merged articles was previously not described accurately. The description is now more accurate and we consider additional algorithms to avoid bias in the generated instances. (2) Two authors from the \dsaia\ and \dsaib\ data sets with incomplete data have been used in the experiments; these authors are now omitted. (3) There were several technical errors in the code relating to the treatment of article and cluster identifiers of the crawled articles. This led to inconsistent instances and thus erroneous possible h-index increases. All these errors have been corrected.}}
\author[1,2]{René van Bevern}
\affil[1]{Mathematical Center in Akademgorodok, Novosibirsk, Russian Federation}
\affil[2]{Department of Mechanics and Mathematics, Novosibirsk State University, Novosibirsk, Russian Federation, \texttt{rvb@nsu.ru}}
\author[3]{Christian Komusiewicz}
\affil[3]{Fachbereich Mathematik und Informatik, Philipps-Universit\"at Marburg, Marburg, Germany, \texttt{komusiewicz@informatik.uni-marburg.de}}
\author[4]{Hendrik~Molter}
\author[4]{Rolf~Niedermeier}
\affil[4]{Algorithmics and Computational Complexity, Fakult\"at IV, 
 TU Berlin, Germany, \texttt{\{h.molter,~rolf.niedermeier,~toby.walsh\}@tu-berlin.de}}
\author[5]{Manuel~Sorge}
\affil[5]{Institute of Informatics, University of Warsaw, Poland, \texttt{manuel.sorge@mimuw.edu.pl}}
\author[4]{Toby Walsh}
\begin{document}
\pagestyle{plain}
\thispagestyle{plain}
\maketitle

\begin{abstract}
The \hind{} is an important bibliographic 
measure used to assess the performance of researchers. 
Dutiful researchers merge different versions of their articles in their Google Scholar profile even though this can decrease their \hind{}. In this article, we study the manipulation of the \hind{} by undoing such merges. In contrast to manipulation by merging articles (van Bevern et al.~[\emph{Artif.\ Intel.} 240:19--35, 2016]) such manipulation is harder to detect. 
We present numerous results on computational
complexity (from linear-time algorithms to parameterized 
computational hardness results) and empirically indicate that at least small 
improvements of the \hind{} by splitting merged articles are unfortunately easily 
achievable.

\bigskip
\noindent\emph{Keywords:} Google scholar profiles, citation graph,  article splitting, NP-hard problems, parameterized complexity, experimental algorithmics

\end{abstract}

\section{Introduction}

We suppose that an author has a publication profile, for example in Google Scholar, that consists of single articles and aims to increase her or his \hind{}\footnote{The \hind{} of a
researcher is the maximum number~$h$ such that he or she has at least~$h$~articles each cited at least~$h$~times~\citep{Hir05}.} by merging articles. This will result in a new article with a potentially higher number of citations. The merging option is provided by Google Scholar to identify different versions of the same article, for example a journal version and its archived version.
 
 Our main points of reference are three publications dealing with the manipulation of the \hind{}, particularly motivated by Google Scholar author profile manipulation~\citep{BKNSW16,KA13,EP16}. Indeed, we will closely follow the notation and concepts introduced by \citet{BKNSW16} and we refer to this work for discussion of related work concerning strategic self-citations to manipulate the \hind~\citep{BK11,LRT14,Vin13}, other citation indices~\citep{Egghe2006,EP16,Woe08b}, and manipulation in general~\citep{fhhcacm10,fpaimag10,oravec2017manipulation}.  The main difference between this work and previous publications is that they focus on \emph{merging} articles for increasing the \hind{}~\citep{BKNSW16,KA13,EP16,BK14} or other indices like the g-index and the i10-index~\citep{EP16}, while we focus on \emph{splitting}.

 In the case of splitting, we assume that, most of the time, an author will
maintain a correct profile in which all necessary merges are performed. Some of
these merges may decrease the \hind{}. For instance, this can be the case when
the two most cited papers are the conference and archived version of the same
article. A very realistic scenario is that at certain times, for example when being evaluated by their dean\footnote{ \citet{Les15} pointed out that the \hind{} is the modern 
equivalent of the old saying ``Deans can't read,
they can only count.'' He also remarked that the idea of
``least publishable units'' by dividing one's reports into multiple
(short) papers has been around since the 1970s.}, an author may temporally undo some of these merges to increase artificially her or his \hind{}.
A further point which distinguishes manipulation by splitting from manipulation by merging is that for merging it is easier to detect whether someone cheats too much. This can be done by looking at the titles of merged articles~\citep{BKNSW16}. In contrast, it is much harder to prove that someone is manipulating by splitting; the manipulator can always claim to be too busy or that he or she does not know how to operate the profile.

The main theoretical conclusion from our work is that \hind{} manipulation by splitting merged articles\footnote{Google Scholar allows authors to group different versions of an article. We call the resulting grouping a \emph{merged article}.  Google Scholar author profiles typically contain many merged articles, e.g.\ an arXiv version with a conference version and a journal version.} %into several ones
is typically computationally easier than manipulation by merging. Hence, undoing all merges and then merging from scratch might be computationally intractable in cases while, on the contrary, computing an optimal splitting is computationally feasible.  The only good news (and, in a way, a recommendation) in this sense is that if one would use the citation measure ``$\mcites$'' as defined by \citet{BKNSW16}, then manipulation is computationally much harder than for the ``$\ucites$'' measure used by Google Scholar.
In the practical part of our work, we experimented with data from Google Scholar profiles~\citep{BKNSW16}. 

\paragraph{Models for splitting articles.} 

We consider the publication profile of an author and denote the articles in this profile by~$W\subseteq V$, where $V$~is the set of all articles. Following previous work~\citep{BKNSW16}, we call these articles \emph{atomic}. Merging articles yields a partition~$\merge$ of~$W$ in which each part~$P \in \merge$ with~$|P| \geq 2$ is a \emph{merged article}.

Given a partition~$\merge$ of~$W$, the aim of splitting merged
articles is to find a refined partition~$\unmerge$ of~$\merge$ with a
larger \hind{}, where the \emph{\hind{} of a partition~$\merge$} is the
largest number~$h$ such that there are at least~$h$~parts~$P\in\merge$
whose number~$\mu(P)$ of citations is at least~$h$. Herein, we have
multiple possibilities of defining the number~$\mu(P)$ of citations of
an article in~$\merge$~\citep{BKNSW16}. The first one, $\scites(P)$,
was introduced by \citet{KA13}, and is simply the
sum of the citations of each atomic article in~$P$. Subsequently, \citet{BKNSW16} introduced the  citation
measures~$\ucites$ (used by Google Scholar), where we take the
cardinality of the union of the citations, and~$\mcites$, where we
additionally remove self-citations of merged articles as well as duplicate
citations between merged articles. In generic definitions, we denote
these measures by~$\mu$, see \cref{fig:mergevars} for an
illustration % \footnote{Reproduced from \citet{BKNSW16}.}
and \cref{sec:prelim} for the formal definitions. Note that, to
compute these citation measures, we need a \emph{citation graph}, a
directed graph whose vertices represent articles and in which an arc
from a vertex~$u$ to a vertex~$v$ means that article~$u$ cites
article~$v$.

\begin{figure}[t]
  % \centering
  \hfill
  \begin{subfigure}{2cm}
    \begin{tikzpicture}[y=0.6cm]
      \tikzstyle{edge} = [color=black,opacity=.2,line cap=round, line
      join=round, line width=19pt]

      \tikzstyle{vertex}=[circle,draw,fill=white,minimum
      size=10pt,inner sep=1pt,font=\footnotesize]

      \node[vertex] (1) at (0,0) {}; \node[vertex] (2) at (0,1) {};
      \node[vertex] (3) at (0,2) {};

      \node[vertex] (4) at (1,0) {1}; \node[vertex] (5) at (1,1) {2};
      \node[vertex] (6) at (1,2) {2};

      \draw[citation] (1)--(4); \draw[citation] (1)--(5);
      \draw[citation] (2)--(6); \draw[citation] (3)--(6);
      \draw[citation] (4)--(5);
      \begin{pgfonlayer}{background}
        \draw[edge] (2.center)--(3.center); \draw[edge]
        (4.center)--(5.center);
      \end{pgfonlayer}
    \end{tikzpicture}
    \caption{Unmerged}
    \label{fig:unmerged}
  \end{subfigure}
  \hfill
  \begin{subfigure}{2cm}
    \begin{tikzpicture}[y=0.6cm]
      \tikzstyle{edge} = [color=black,opacity=.2,line cap=round,
      line join=round, line width=19pt]
      
      \tikzstyle{vertex}=[circle,draw,fill=white,minimum
      size=10pt,inner sep=1pt,font=\footnotesize]
      
      \node[vertex] (1) at (0,0) {}; \node[vertex] (3) at (0,2) {};
      
      \node[vertex] (4) at (1,0) {3}; \node[vertex] (6) at (1,2)
      {2};
      
      \draw[citation] (1)to[out=20,in=160](4); \draw[citation]
      (1)to[out=-20,in=-160](4); \draw[citation]
      (3)to[out=20,in=160](6); \draw[citation]
      (3)to[out=-20,in=-160](6);

      \draw[citation, loop above] (4)to (4);
            %dummyedges for bounding box
      \begin{pgfonlayer}{background}
        \draw[edge,transparent] (2.center)--(3.center); \draw[edge,transparent]
        (4.center)--(5.center);
      \end{pgfonlayer}
    \end{tikzpicture}
    \caption{$\scites$}\label{fig:scites}
  \end{subfigure}
  \hfill
  \begin{subfigure}{2cm}
    \begin{tikzpicture}[y=0.6cm]
      \tikzstyle{edge} = [color=black,opacity=.2,line cap=round, line
      join=round, line width=19pt]
      
      \tikzstyle{vertex}=[circle,draw,fill=white,minimum size=10pt,inner
      sep=1pt,font=\footnotesize]
      
      \node[vertex] (1) at (0,0) {}; \node[vertex] (3) at (0,2) {};
      
      \node[vertex] (4) at (1,0) {2}; \node[vertex] (6) at (1,2) {2};
      
      \draw[citation, loop above] (4)to (4);

      \draw[citation] (1)--(4); \draw[citation] (3)to[out=20,in=160](6);
      \draw[citation] (3)to[out=-20,in=-160](6);
            %dummyedges for bounding box
      \begin{pgfonlayer}{background}
        \draw[edge,transparent] (2.center)--(3.center); \draw[edge,transparent]
        (4.center)--(5.center);
      \end{pgfonlayer}
    \end{tikzpicture}
    \caption{$\ucites$}
    \label{fig:ucites}
  \end{subfigure}
  \hfill
  \begin{subfigure}{1.9cm}
    \begin{tikzpicture}[y=0.6cm]
      \tikzstyle{edge} = [color=black,opacity=.2,line cap=round,
      line join=round, line width=19pt]
      
      \tikzstyle{vertex}=[circle,draw,fill=white,minimum
      size=10pt,inner sep=1pt,font=\footnotesize]
      
      \node[vertex] (1) at (0,0) {}; \node[vertex] (3) at (0,2) {};
      
      \node[vertex] (4) at (1,0) {1}; \node[vertex] (6) at (1,2)
      {1};
      
      \draw[citation] (1)--(4); \draw[citation] (3)--(6);
            %dummyedges for bounding box
      \begin{pgfonlayer}{background}
        \draw[edge,transparent] (2.center)--(3.center); \draw[edge,transparent]
        (4.center)--(5.center);
      \end{pgfonlayer}
    \end{tikzpicture}
    \caption{$\mcites$}
    \label{fig:mcites}
  \end{subfigure}
  \hfill\mbox{}
  \caption{Vertices represent articles, arrows represent citations,
    numbers are citation counts. The articles on a gray
    background in \subref{fig:unmerged} have been merged in
    \subref{fig:scites}--\subref{fig:mcites}, and citation
    counts are given according to the measures~$\scites$,~$\ucites$,
    and~$\mcites$, respectively. The arrows represent the citations counted by the corresponding measure.}
  \label{fig:mergevars}
\end{figure}

In this work, we introduce three different operations that may be used for 
undoing merges in a merged article~$a$:
\smallskip
  \begin{description}
  \item[Atomizing:] splitting $a$ into all its atomic
    articles,
  \item[Extracting:] splitting off a single atomic
    article from~$a$, and
  \item[Dividing:] splitting $a$ into two parts arbitrarily.
  \end{description}
\smallskip
See \cref{fig:splitvars} for an illustration of the three splitting operations. 
Note that the atomizing, extracting, and dividing operations are successively strictly more powerful in the sense that successively larger \hinds{} can be achieved. 
Google Scholar offers the extraction operation. Multiple applications of the extraction operation can, however, simulate atomizing and dividing.
 
\begin{figure}[t]
  \centering
  \hfill
  \begin{subfigure}{2cm}
    \centering
    \begin{tikzpicture}[y=0.8cm]
      \tikzstyle{spacing} = [color=black,opacity=0,line cap=round, line
      join=round, line width=19pt]
      \tikzstyle{edge} = [color=black,opacity=.2,line cap=round, line
      join=round, line width=19pt]
      \tikzstyle{vertex}=[circle,draw,fill=white,minimum
      size=10pt,inner sep=1pt,font=\footnotesize]

      \node[vertex] (1) at (0,0) {}; \node[vertex] (2) at (0,1) {};
      \node[vertex] (3) at (0,2) {}; \node[vertex] (4) at (0,3) {};

      \node[vertex] (5) at (1,0) {}; \node[vertex] (6) at (1,1) {};
      \node[vertex] (7) at (1,2) {}; \node[vertex] (8) at (1,3) {};

      \draw[citation] (1)--(5); \draw[citation] (1)--(6);
      \draw[citation] (2)--(7); \draw[citation] (3)--(8);
      \draw[citation] (4)--(8);
      \begin{pgfonlayer}{background}
         \draw[edge]
        (5.center)--(8.center);
      \end{pgfonlayer}
      \begin{pgfonlayer}{background}
         \draw[spacing]
        (5.center)--(8.center);
      \end{pgfonlayer}
    \end{tikzpicture}
    \caption{Merged}
    \label{fig:merged}
  \end{subfigure}
  \hfill
  \begin{subfigure}{1.8cm}
    \centering
    \begin{tikzpicture}[y=0.8cm]
          \tikzstyle{spacing} = [color=black,opacity=0,line cap=round, line
      join=round, line width=19pt]
      \tikzstyle{edge} = [color=black,opacity=.2,line cap=round, line
      join=round, line width=19pt]

      \tikzstyle{vertex}=[circle,draw,fill=white,minimum
      size=10pt,inner sep=1pt,font=\footnotesize]

      \node[vertex] (1) at (0,0) {}; \node[vertex] (2) at (0,1) {};
      \node[vertex] (3) at (0,2) {}; \node[vertex] (4) at (0,3) {};

      \node[vertex] (5) at (1,0) {1}; \node[vertex] (6) at (1,1) {1};
      \node[vertex] (7) at (1,2) {1}; \node[vertex] (8) at (1,3) {2};

      \draw[citation] (1)--(5); \draw[citation] (1)--(6);
      \draw[citation] (2)--(7); \draw[citation] (3)--(8);
      \draw[citation] (4)--(8);
      \begin{pgfonlayer}{background}
        \draw[edge,color=white] (5.center)--(7.center);
      \end{pgfonlayer}
            \begin{pgfonlayer}{background}
         \draw[spacing]
        (5.center)--(8.center);
      \end{pgfonlayer}
    \end{tikzpicture}
    \caption{Atomizing}\label{fig:atomize}
  \end{subfigure}\hfill
  \begin{subfigure}{1.8cm}
    \centering
    \begin{tikzpicture}[y=0.8cm]
          \tikzstyle{spacing} = [color=black,opacity=0,line cap=round, line
      join=round, line width=19pt]
      \tikzstyle{edge} = [color=black,opacity=.2,line cap=round, line
      join=round, line width=19pt]

      \tikzstyle{vertex}=[circle,draw,fill=white,minimum
      size=10pt,inner sep=1pt,font=\footnotesize]

      \node[vertex] (1) at (0,0) {}; \node[vertex] (2) at (0,1) {};
      \node[vertex] (3) at (0,2) {}; \node[vertex] (4) at (0,3) {};

      \node[vertex] (5) at (1,0) {}; \node[vertex] (6) at (1,1) {};
      \node[vertex] (7) at (1,2) {}; \node[vertex] (8) at (1,3) {2};

      \draw[citation] (1)--(5); \draw[citation] (1)--(6);
      \draw[citation] (2)--(7); \draw[citation] (3)--(8);
      \draw[citation] (4)--(8);
      \begin{pgfonlayer}{background}
         \draw[edge]
        (5.center)--(7.center);
      \end{pgfonlayer}
            \begin{pgfonlayer}{background}
         \draw[spacing]
        (5.center)--(8.center);
      \end{pgfonlayer}
    \end{tikzpicture}
    \caption{Extracting}
    \label{fig:extract}
  \end{subfigure}\hfill
  \begin{subfigure}{1.9cm}
    \centering
    \begin{tikzpicture}[y=0.8cm]
          \tikzstyle{spacing} = [color=black,opacity=0,line cap=round, line
      join=round, line width=19pt]
      \tikzstyle{edge} = [color=black,opacity=.2,line cap=round, line
      join=round, line width=19pt]

      \tikzstyle{vertex}=[circle,draw,fill=white,minimum
      size=10pt,inner sep=1pt,font=\footnotesize]

      \tikzstyle{citation}=[->] \tikzstyle{mergeable}=[very thick]
      
      \node[vertex] (1) at (0,0) {}; \node[vertex] (2) at (0,1) {};
      \node[vertex] (3) at (0,2) {}; \node[vertex] (4) at (0,3) {};

      \node[vertex] (5) at (1,0) {}; \node[vertex] (6) at (1,1) {};
      \node[vertex] (7) at (1,2) {}; \node[vertex] (8) at (1,3) {};

      \draw[citation] (1)--(5); \draw[citation] (1)--(6);
      \draw[citation] (2)--(7); \draw[citation] (3)--(8);
      \draw[citation] (4)--(8);
      \begin{pgfonlayer}{background}
         \draw[edge] (5.center)--(6.center);
        \draw[edge] (7.center)--(8.center);
      \end{pgfonlayer}
            \begin{pgfonlayer}{background}
         \draw[spacing]
        (5.center)--(8.center);
      \end{pgfonlayer}
    \end{tikzpicture}    \caption{Dividing}
    \label{fig:split}
  \end{subfigure}
  \hfill\mbox{}
  \caption{Vertices represent articles, arrows represent citations,
    numbers are citation counts. The articles on a gray background
    have been merged in the initial profile \subref{fig:merged} and
    correspond to remaining merged articles after applying one operation in
    \subref{fig:extract} and \subref{fig:split}. Each (merged) article has the same
    citation count, regardless of the used
    measure~$\scites$,~$\ucites$, and~$\mcites$. }
  \label{fig:splitvars}
\end{figure}

The three splitting operations lead to three 
problem variants, each taking as input a citation
graph~$D=(V,A)$, a set $W\subseteq V$ of articles belonging to the author, a partition~$\merge$ of~$W$ that defines already-merged
articles, and a non-negative integer~$h$ denoting the \hind{} to achieve.
For $\mu \in \{\scites,\allowbreak
\ucites, \mcites\}$, we define the following problems.
  \vspace{-2pt}
\decprob{\hindA{}($\mu$)}
{$D=(V,A)$, $W\subseteq V$, $\merge$ of~$W$, and~$h$.}
  {Is there a partition~$\unmerge$ of~$W$ such that
  \begin{compactenum}[i)]
  \item for each~$\unmergelt \in \unmerge$ either~$|\unmergelt|=1$ or there
  is a~$\mergelt \in \merge$ such that~$\unmergelt=\mergelt$,
  \item the \hind{} of~$\unmerge$ is at least~$h$ with respect to~$\mu$?
  \end{compactenum}}
  \vspace{-9pt}
\decprob{\hindE{}$(\mu)$}
{$D=(V,A)$, $W\subseteq V$, $\merge$ of~$W$, and~$h$.}
{Is there a partition~${\unmerge}$ of~$W$ such that
  \begin{compactenum}[i)]
  \item for each~$\unmergelt \in \unmerge$ there is a~$\mergelt \in
  \merge$ such that~$\unmergelt \subseteq \mergelt$,
  \item for each~$\mergelt \in \merge$ we have~$|\{\unmergelt \in \unmerge
  \ | \ \unmergelt \subset \mergelt \ \text{and} \ |\unmergelt| > 1\}| \leq 1$,
  \item the \hind{} of~$\unmerge$ is at least~$h$ with respect to~$\mu$?
  \end{compactenum}}
   \vspace{-9pt}
\decprob{\hindS{}$(\mu)$}
{$D=(V,A)$, $W\subseteq V$, $\merge$ of~$W$, and~$h$.}
{Is there a partition~$\unmerge$ of~$W$ such that
  \begin{compactenum}[i)]
  \item for each~$\unmergelt \in \unmerge$ there is a~$\mergelt \in \merge$
  such that~$\unmergelt \subseteq \mergelt$,
  \item the \hind{} of~$\unmerge$ is at least~$h$ with respect to~$\mu$?
  \end{compactenum}}

\paragraph{Conservative splitting.}
We study for each of the problem variants an additional upper bound
on the number of merged articles that are split. We call these
variants \emph{conservative}: if an insincere author would like to
manipulate his or her profile temporarily, then he or she might prefer a
manipulation that can be easily undone. To formally define \hindlA{}, \hindlE{}, and \hindlS{}, we add the
following restriction to the partition~$\unmerge$: ``the
number~$|\merge \setminus \unmerge|$ of changed articles is at
most~$k$''.

A further motivation for the conservative variants is that, in a Google
Scholar profile, an author can click on a merged article and tick a
box for each atomic article that he or she wants to extract. Since
Google Scholar uses the $\ucites$ measure~\citep{BKNSW16}, \hindlE$(\ucites)$ thus corresponds closely to manipulating the Google Scholar \hind{} via few of the splitting operations available to the user.

 \paragraph{Cautious splitting.} For each splitting operation, we also
study an upper bound~$k$ on the number of split operations.  Following our previous work~\citep{BKNSW16}, we call this variant 
\emph{cautious}. In the case of atomizing, conservativity and
caution coincide since exactly one operation is performed per
changed article. Thus, we obtain two cautious problem variants:
\hindkE{} and \hindkS{}.  For both we add the following restriction to
the partition $\unmerge$: ``the number~$|\unmerge|-|\merge|$ of
extractions (or divisions, respectively) is at most~$k$''. In both variants we consider $k$ to be part of the input.

\paragraph{Our results.}
We investigate the parameterized computational complexity of our problem variants with respect to the parameters ``the h-index $h$ to achieve'', and in the conservative case ``the number $k$ of modified merged articles'', and in the cautious case ``the number $k$ of splitting operations''. 
To put it briefly, the goal is to exploit potentially small parameter values 
(that is, special properties of the input instances) in order to gain 
efficient algorithms for problems that are in general computationally hard.
In our context, the choice of the parameter~$h$ is motivated by the scenario that young researchers may have an incentive to increase their h-index and since they are young, the h-index $h$ to achieve is not very large. The conservative and cautious scenario tries to capture that the manipulation can easily be undone or is hard to detect, respectively. Hence, it is well motivated that the parameter $k$ shall be small.
Our theoretical (computational complexity classification) results are summarized in \cref{tab:results} (see Section~\ref{sec:prelim} for further definitions). The measures $\scites$ and $\ucites$ behave basically the same. In particular, in case of atomizing and extracting, manipulation is doable in linear time, while $\mcites$ mostly leads to (parameterized) intractability, that is, to high worst-case computational complexity. Moreover, the dividing operation (the most general one) seems to lead to computationally much harder problems than atomizing and extracting.

We performed experiments with real-world data~\citep{BKNSW16} and the
mentioned linear-time algorithms, in particular for the case directly relevant to Google
Scholar, that is, using the extraction operation and the $\ucites$
measure. Our general findings are that increases of the \hind{} by one
or two typically are easily achievable with few operations. 
The good news is that dramatic manipulation opportunities due to splitting are 
rare. They cannot be excluded, however, and they 
could be easily executed when relying
on standard operations and measures (as used in Google Scholar). 
Working with $\mcites$ instead of the other two could substantially 
hamper manipulation.

\begin{table}%[t]
    
    \caption{Computational (time) complexity of the various variants of manipulating 
    the \hind{}
    by splitting operations (see \cref{sec:prelim} for definitions). For all FPT and W[1]-hardness results we also
    show NP-hardness. 
    \\$\dagger$:
    wrt.\ parameter $h$, the \hind{} to achieve.\\$\diamond$: wrt.\ parameter $k$, the number of
    operations.\\$\star$: wrt.\ parameter $h + k + s$, where $s$ is the largest number of articles merged into one.\\
    $\ddagger$: NP-hard even if $k=1$ (\cref{prop:hindSNPh}).\\
    $\odot$: Parameterized complexity wrt.\ $h$ open.\\}
\label{tab:results}
  \centering
  \begin{tabular}{ l  l  l  l }
    \toprule
    Problem & $\scites$ / $\ucites$ & $\mcites$ \\ 
    \midrule
    Atomizing & Linear (\cref{thm:hindA}) & FPT$^\dagger$
    (\cref{thm:hindANP,thm:hindAmcites})\\
    Conservative A. & Linear (\cref{thm:hindA}) & W[1]-h$^\star$
    (\cref{thm:hindkAW1})\\
    \midrule
    Extracting & Linear (\cref{thm:hindE}) & NP-h$^\odot$ (\cref{thm:hindANP}) \\ 
    Conservative E. & Linear (\cref{thm:hindE}) &
    W[1]-h$^\star$ (\cref{cor:mcites}) \\
    Cautious E. & Linear (\cref{thm:hindE}) &
    W[1]-h$^\star$ (\cref{cor:mcites}) \\
    \midrule
    Dividing & FPT$^\dagger$
    (\cref{thm:hindS}) &  NP-h$^\odot$ (\cref{prop:hindSNPh}) \\
    Conservative D. & FPT$^{\dagger,\ddagger}$
    (\cref{thm:hindS}) & W[1]-h$^\star$ (\cref{cor:mcites})
    \\
        Cautious D. & W[1]-h$^{\diamond,\odot}$
    (\cref{thm:hindkS}) & W[1]-h$^\star$ (\cref{cor:mcites})
    \\
    \bottomrule
    \end{tabular}
\end{table}

\section{Preliminaries}\label{sec:prelim}

Throughout this work, we use~$n:=|V|$ for the number of input
articles and~$m:=|A|$ for the overall number of arcs in the input citation graph~$D=(V,E)$.
Let $\degi(v)$~denote the indegree of an article~$v$ in a citation
graph~$D=(V,A)$, that is, $v$'s number of citations. Furthermore, let $\Ni_D(v) :=
\{u \mid (u, v) \in A\}$ denote the set of articles that cite~$v$ and
$\Ni_{D-W}(v) := \{u \mid (u, v) \in A \wedge u \notin W\}$ be the set
of articles outside~$W$ that cite~$v$. 
For each part~$\mergelt\in\merge$, the
following three measures for the number~$\mu(P)$ of
citations of~$P$ have been introduced~\citep{BKNSW16}. 
They are illustrated in \cref{fig:mergevars}.
The measure

\begin{align*}
  \scites(\mergelt)&:=\sum_{v\in \mergelt}\degi(v)\\
  \intertext{defines the number of citations of a merged article~$P$
    as the sum of the citations of the atomic articles it contains.
    This measure was proposed by \citet{KA13}. In contrast,
    the measure}
  \ucites(\mergelt)&:=\Bigl|\bigcup_{v\in \mergelt}\Ni_D(v)\Bigr|\\
  \intertext{defines the number of citations of a merged article~$P$
    as the number of distinct atomic articles citing at least one
    atomic article in~$P$.  
    Google Scholar uses the~$\ucites$~measure~\citep{BKNSW16}.
    The measure} \mcites(\mergelt)&:=\Bigl|\bigcup_{v\in
    \mergelt}\Ni_{D-W}(v)\Bigr| +{}\\ &\hspace{-3em}
  \smashoperator[l]{\sum_{\substack{\mergelt'\in\,\merge\setminus\, \{\mergelt}\}}}
  \begin{cases}
    1&\text{if }\exists v\in\mergelt' \exists w\in\mergelt:(v,w)\in A,\\
    0&\text{otherwise}
  \end{cases}
\end{align*}

\noindent is perhaps the most natural one: at most one citation of a part~$P'\in\merge$ to a part~$P\in\merge$ is counted, that is, we
additionally remove duplicate citations between merged articles and
self-citations of merged articles.

 Our theoretical analysis is in the framework of parameterized
complexity~\citep{Cy15,DF13,FG06,Nie06}. That is, for those problems
that are NP-hard, we study the influence of a \emph{parameter}, an
integer associated with the input, on the computational
complexity. For a problem~$P$, we seek to decide~$P$ using a \emph{fixed-parameter algorithm},
an algorithm with
running time~$f(p) \cdot |q|^{O(1)}$, where $q$ is the input and $f(p)$ a
computable function depending only on the parameter~$p$. If such an
algorithm exists, then $P$~is
\emph{fixed-parameter tractable} (FPT) with respect to~$p$. 
W[1]-hard parameterized problems presumably do not admit FPT algorithms. For instance, to find an order-$k$ clique in an undirected graph is known to be W[1]-hard for the parameter~$k$. W[1]-hardness of a problem~$P$ parameterized by~$p$ can be shown via a \emph{parameterized reduction} from a known W[1]-hard problem~$Q$ parameterized by~$q$. That is, a reduction that runs in~$f(q)\cdot n^{O(1)}$ time on input of size~$n$ with parameter~$q$ and produces instances that satisfy~$p \leq f(q)$ for some function~$f$.

\section{sumCite and unionCite}\label{sec:sucites}
In this section, we study the $\scites$ and $\ucites$ measures. We provide
linear-time algorithms for atomizing and extracting and analyze the parameterized complexity of dividing with respect to the number~$k$ of splits and the \hind{}~$h$ to achieve. In our results for $\scites$ and $\ucites$, we often tacitly use the observation
that local changes to the merged articles do not influence the citations of
other merged articles.

\paragraph{Manipulation by atomizing.}
\SetKwFunction{Atomize}{Atomize}

Recall that the atomizing operation splits a merged article into singletons and that, for the atomizing operation, the notions of \emph{conservative} (touching few articles) and \emph{cautious} (making few split operations) manipulation coincide and are thus both captured by \hindlA{}. %It is easy to argue that
Both \hindA{} and \hindlA{} are solvable in linear time. Intuitively, it suffices to find the merged articles which, when atomized, increase the number of articles with at least~$h$ citations the most.  This leads to \cref{alg:hindA,alg:hindkA} for \hindA{} and \hindlA{}. Herein, the \Atomize{} operation takes a set~$S$ as input and returns~$\{\{s\} \mid s \in S\}$. The algorithms yield the following theorem.

\begin{algorithm}[t]
  \DontPrintSemicolon
  
  \SetKwFunction{Atomize}{Atomize}
  
  \KwIn{A citation graph~$D=(V,A)$, a set
  $W\subseteq V$ of articles, a partition~$\merge$ of~$W$, a
  nonnegative integer~$h$ and a measure~$\mu$.}
  
  \KwOut{A partition~$\unmerge$ of~$W$.}
  
  \BlankLine
  
  $\unmerge \leftarrow \emptyset$\; 
  \ForEach{$\mergelt \in \merge$}
  {
	$\mathcal A \leftarrow \Atomize{\mergelt}$\;
	\lIf{$\exists A \in \mathcal A \colon \mu(A) \geq h$}
		{
		$\unmerge \leftarrow \unmerge \cup \mathcal A$
		}
	\lElse
		{
		$\unmerge \leftarrow \unmerge \cup \{ \mergelt \}$
		}
  }
  \KwRet{$\unmerge$}
  \caption{Atomizing}
  \label{alg:hindA}
\end{algorithm}

\begin{algorithm}[t]
  \DontPrintSemicolon
  
  \SetKwFunction{Atomize}{Atomize}
  
  \KwIn{A citation graph~$D=(V,A)$, a set
  $W\subseteq V$ of articles, a partition~$\merge$ of~$W$, nonnegative
  integers~$h$ and~$k$, and a measure~$\mu$.}
  
  \KwOut{A partition~$\unmerge$ of~$W$.}
  
  \BlankLine
  
  $\unmerge \leftarrow \merge$\; 
  \ForEach(\nllabel{lin:recstart}){$\mergelt \in \merge$}
  {%
  	$\ell_{P} \gets 0$\;
	$\mathcal A \leftarrow \Atomize{\mergelt}$\;
	%\lForEach{$A \in \mathcal A$}{\lIf{$\mu(A) \geq h$}{$\ell_{P}\leftarrow\ell_{P}+1$}}%
        $\ell_{P} \leftarrow \ell_P + |\{A \in \mathcal A \mid \mu(A) \geq h\}|$\;
	\lIf{$\mu(P) \geq h$}{$\ell_{P}\leftarrow\ell_{P}-1$}\nllabel{lin:recfin}
  }%
  \For(\nllabel{lin:atomstart}){$i\leftarrow 1$ \KwTo $k$} {
  	$\mergelt^* \leftarrow \argmax_{\mergelt \in \merge}\{\ell_{\mergelt}\}$\;
  	\If{$\ell_{\mergelt^*} > 0$} {
  		$\mathcal A \leftarrow \Atomize{$\mergelt^*$}$\;
  		$\unmerge \leftarrow (\unmerge \setminus \{\mergelt^*\}) \cup \mathcal A$\;
  	}
  	$\ell_{\mergelt^*} \gets -1$\nllabel{lin:atomfin}\;
  }
  \KwRet{$\unmerge$}
  \caption{Conservative Atomizing}
  \label{alg:hindkA}
\end{algorithm}

\begin{theorem}
  \hindA{}$(\mu)$ and \hindlA{}$(\mu)$ are solvable in linear time
  for~$\mu\in\{\scites,\allowbreak\ucites\}$.\label{thm:hindA}
\end{theorem}

\begin{proof}
  We first consider \hindA{}$(\mu)$.
Let~$\unmerge$ be a partition created from a partition~$\merge$ by atomizing a
part~$\mergelt^* \in \merge$. Observe that for all~$\mergelt \in \merge$
and~$\unmergelt \in \unmerge$ we have that~$\mergelt = \unmergelt$
implies~$\mu(\mergelt) = \mu(\unmergelt)$, for~$\mu\in\{\scites,\ucites\}$.
Intuitively this means that atomizing a single part~$\mergelt^* \in \merge$ does
not alter the~$\mu$-value of any other part of the partition.

\cref{alg:hindA} computes a partition~$\unmerge$ that has a
maximal number of parts~$\unmergelt$ with~$\mu(\unmergelt) \geq h$ that can
be created by applying atomizing operations to~$\merge$: It applies the atomizing operation to each part~$\mergelt \in \merge$ if there is
at least one singleton~$A$ in the atomization of~$\mergelt$ with~$\mu(A) \geq
h$. By the above observation, this cannot decrease the total
number of parts in the partition that have a~$\mu$-value of at least~$h$.
Furthermore, we have that for all~$\unmergelt \in \unmerge$, we cannot
potentially increase the number of parts with~$\mu$-value at least~$h$ by
atomizing~$\unmergelt$. Thus, we get the maximal number of parts~$\unmergelt$
with~$\mu(\unmergelt) \geq h$ that can be created by applying atomizing
operations to~$\merge$.

Obviously, if~$\unmerge$ has at least~$h$ parts~$\unmergelt$
with~$\mu(\unmergelt) \geq h$, we face a yes-instance. Conversely, if the input
is a yes-instance, then there is a number of atomizing operations that can be
applied to~$\merge$ such that the resulting partition~$\unmerge$ has at
least~$h$ parts~$\unmergelt$ with~$\mu(\unmergelt) \geq h$.

It is easy to see that the algorithm runs in linear time and finds a
yes-instance if it exists. If the output partition~$\unmerge$ does not have at
least~$h$ parts~$\unmergelt$ with~$\mu(\unmergelt) \geq h$, then the input is a
no-instance.

\medskip
 The pseudocode for solving \hindlA$(\mu)$ is given in \cref{alg:hindkA}. 
  First, in Lines~\ref{lin:recstart}--\ref{lin:recfin}, for each part~\(\mergelt\), \cref{alg:hindkA} records how many singletons~$A$ with~$\mu(A) \geq h$ are created when atomizing~\(P\). Then, in Lines~\ref{lin:atomstart}--\ref{lin:atomfin}, it repeatedly atomizes the part yielding the most such singletons. This procedure creates the maximum number of parts that have a $\mu$-value of at least~$h$, since the \(\mu\)-value cannot be increased by exchanging one of these atomizing operations by another.

  Obviously, if $\unmerge$~has at least $h$~parts~$\unmergelt$ with $\mu(\unmergelt) \geq h$, then we face a yes-instance. Conversely, if the input is a yes-instance, then there are $k$~atomizing operations that can be applied to $\merge$ to yield an \hind{} of at least~$h$. Since \cref{alg:hindkA} takes successively those operations that yield the most new parts with $h$~citations, the resulting partition~$\unmerge$ has at least $h$~parts~$\unmergelt$ with $\mu(\unmergelt) \geq h$. It is not hard to verify that the algorithm has linear running time.
\end{proof}

\paragraph{Manipulation by extracting.} 
Recall that the extracting operation removes a single article from a
merged article. All variants of the extraction problem are solvable in
linear time. 
Intuitively, in the cautious
case, it suffices to find $k$~extracting operations that each increase
the number of articles with~$h$ citations. In the conservative case,
we determine for each merged article a set of extraction operations
that increases the number of articles with~$h$ citations the
most. Then we use the extraction operations for those~$k$
merged articles that yield the $k$~largest increases in the number of
articles with~$h$ citations.  This leads to \cref{alg:hindE,alg:hindkE,alg:hindlE} for \hindE{}, \hindkE{}, and \hindlE{}, respectively, which yield the following theorem.

\begin{algorithm}[t]
  \DontPrintSemicolon
  
  \KwIn{A citation graph~$D=(V,A)$, a set
  $W\subseteq V$ of articles, a partition~$\merge$ of~$W$, a nonnegative
  integer~$h$ and a measure~$\mu$.}
  
  \KwOut{A partition~$\unmerge$ of~$W$.}
  
  \BlankLine
  
  $\unmerge \leftarrow \emptyset$\;
  \ForEach{$\mergelt \in \merge$}
  {
  	\ForEach{$v \in \mergelt$} {
  	\If{$\mu(\{v\}) \geq h$}
  	{
  	$\unmerge \leftarrow \unmerge \cup \{\{v\}\}$\;
  	$\mergelt \leftarrow \mergelt \setminus \{v\}$\;
  	}
	}
	\lIf{$\mergelt \neq \emptyset$}{$\unmerge \leftarrow \unmerge \cup
  	\{\mergelt\}$} 
  }
  \KwRet{$\unmerge$}
  \caption{Extracting}
  \label{alg:hindE}
\end{algorithm}

\begin{algorithm}[t]
  \DontPrintSemicolon
  
  \KwIn{A citation graph~$D=(V,A)$, a set
  $W\subseteq V$ of articles, a partition~$\merge$ of~$W$, nonnegative
  integers~$h$ and~$k$, and a measure~$\mu$.}
  
  \KwOut{A partition~$\unmerge$ of~$W$.}
  
  \BlankLine
  
  $\unmerge \leftarrow \emptyset$\;
  \ForEach(\nllabel{alg2:formergelt}){$\mergelt \in \merge$}
  {
    \ForEach(\nllabel{alg2:forv}){$v \in \mergelt$} {
      \If(\nllabel{lin:check1}){$k > 0$ and $\mu(\{v\}) \geq h$ and $\mu(\mergelt
  	\setminus \{v\}) \geq h$}%
      {%
        $\unmerge \leftarrow \unmerge \cup \{\{v\}\}$\;
  	$\mergelt \leftarrow \mergelt \setminus \{v\}$\nllabel{alg2:delv}\;
  	$k\leftarrow k-1$\;
      }%
    }
	\lIf{$\mergelt \neq \emptyset$}{$\unmerge \leftarrow \unmerge \cup
  	\{\mergelt\}$} 
  }
  \KwRet{$\unmerge$}
  \caption{Cautious Extracting}
  \label{alg:hindkE}
\end{algorithm}

\begin{algorithm}[t]
  \DontPrintSemicolon
  
  \KwIn{A citation graph~$D=(V,A)$, a set
  $W\subseteq V$ of articles, a partition~$\merge$ of~$W$, nonnegative
  integers~$h$ and~$k$, and a measure~$\mu$.}
  
  \KwOut{A partition~$\unmerge$ of~$W$.}
  
  \BlankLine
  
  \ForEach{$\mergelt \in \merge$}
  {
    	$\ell_\mergelt \gets 0$\;
    	  $\unmerge_\mergelt \leftarrow \emptyset$\;
  	\ForEach{$v \in \mergelt$} {

  	\If(\nllabel{lin:check2}){$\mu(\{v\}) \geq h$ and $\mu(\mergelt \setminus
  	\{v\}) \geq h$} {
  	$\unmerge_\mergelt \leftarrow \unmerge_\mergelt \cup \{\{v\}\}$\;
  	$\mergelt \leftarrow \mergelt \setminus \{v\}$\;
  	$\ell_\mergelt\leftarrow \ell_\mergelt+1$\;
  	}
	}
	\lIf{$\mergelt \neq \emptyset$}{$\unmerge_\mergelt \leftarrow \unmerge_\mergelt
	\cup \{\mergelt\}$} 

  }
  	$\merge^*\gets{}$the $k$~elements of~$\mergelt\in\merge$ with largest
  	$\ell_\mergelt$-values\; \nllabel{lin:count}
  	$\unmerge \leftarrow \bigcup_{\mergelt \in
  	\merge^*}\unmerge_\mergelt \cup (\merge\setminus\merge^*)$\;
  	\KwRet{$\unmerge$}
  \caption{Conservative Extracting}
  \label{alg:hindlE}
\end{algorithm}

\begin{theorem}
  \hindE{}$(\mu)$, \hindlE{}$(\mu)$ and \hindkE{}$(\mu)$ are solvable in linear
  time for~$\mu\in\{\scites,\ucites\}$.\label{thm:hindE}
\end{theorem}

\begin{proof}
  We first consider \hindE{}$(\mu)$. Let~$\unmerge$ be a partition produced from~$\merge$ by extracting an
article from a part~$\mergelt^* \in \merge$. Recall that this does not alter the
$\mu$-value of any other part, i.e., for all $\mergelt \in \merge$ and $\unmergelt \in \unmerge$, we have that $\mergelt = \unmergelt$
implies~$\mu(\mergelt) = \mu(\unmergelt)$ for~$\mu\in\{\scites,\ucites\}$.

Consider \cref{alg:hindE}. It is easy to see that the algorithm
only performs extracting operations and that the running time is polynomial. So
we have to argue that whenever there is a partition~$\unmerge$ that can be
produced by extracting operations from~$\merge$ such that the \hind{} is at
least~$h$, then the algorithm finds a solution.

We show this by arguing that the algorithm produces the maximum number of
articles with at least~$h$ citations possible. Extracting an article that has
strictly less than~$h$ citations cannot produce an \hind{} of at least~$h$
unless we already have an \hind{} of at least~$h$, because the number of
articles with~$h$ or more citations does not increase. Extracting an article
with~$h$ or more citations cannot decrease the number of articles with~$h$ or
more citations. Hence, if there are no articles with at least~$h$ citations that
we can extract, we cannot create more articles with~$h$ or more citations.
Therefore, we have produced the maximum number of articles with~$h$ or more
citations when the algorithm stops.

\medskip   The pseudocode for solving \hindkE$(\mu)$ is given in \cref{alg:hindkE}. We perform up to $k$~extracting operations (\cref{alg2:delv}).  Each of them increases the number of articles that have~$h$ or more citations by one. As \cref{alg:hindkE} checks each atomic article in each merged article, it finds $k$~extraction operations that increase the number of articles with~$h$ or more citations if they exist. Thus, it produces the maximum-possible number of articles that have~$h$ or more citations and that can be created by \(k\)~extracting operations.

  To achieve linear running time, we need to efficiently compute
  \(\mu(P\setminus\{v\})\) in \cref{lin:check1}. This can be done
  by representing articles as integers and using an $n$-element
  array~$A$ which stores throughout the loop in \cref{alg2:forv},
  for each article~\(v\in\Ni_D[P]\), the number~\(A[w]\) of articles
  in~\(P\) that are cited by~\(w\). Using this array, one can compute
  \(\mu(P\setminus\{v\})\) in \(O(\degi(v))\)~time in
  \cref{lin:check1}, amounting to overall linear time. The time needed to maintain array~$A$ is also linear: We initialize it once in the beginning with all zeros. Then, before entering the loop in \cref{alg2:forv}, we can in \(O(|\Ni_D(P)|)\)~total time store for each article~\(v\in\Ni_D[P]\), the number~\(A[w]\) of articles
  in~\(P\) that are cited by~\(w\). To update the array within the loop in \cref{alg2:forv}, we need \(O(\degi(v))\)~time if \cref{alg2:delv} applies. In total, this is linear time.

  \medskip
  Finally, the pseudocode for solving \hindlE$(\mu)$ is given in \cref{alg:hindlE}.
  For each merged article~\(\mergelt\in\merge\), \cref{alg:hindlE} computes a set~\(\unmerge_P\) and the number~\(\ell_P\) of additional articles~\(v\) with~$\mu({v}) \geq h$ that can be created by extracting.  Then it chooses a set \(\merge^*\) of \(k\)~merged articles~\(\mergelt\in\merge\) with maximum~\(\ell_P\) and, from each \(\mergelt\in\merge^*\), extracts the articles in~\(\unmerge_P\).

  This procedure creates the maximum number of articles that have a $\mu$-value of at least~$h$ while only performing extraction operations on at most $k$ merges.
  
  Obviously, if the solution~$\unmerge$ has at least $h$~parts~$\unmergelt$ with~$\mu(\unmergelt) \geq h$, then we face a yes-instance. Conversely, if the input is a yes-instance, then there are $k$~merged articles that we can apply extraction operations to, such that the resulting partition~$\unmerge$ has at least $h$~parts~$\unmergelt$ with~$\mu(\unmergelt) \geq h$. Since the algorithm produces the maximal number of parts~$\unmergelt$ with~$\mu(\unmergelt) \geq h$, it achieves an \hind{} of at least~$h$.

  The linear running time follows by implementing the check in \cref{lin:check2} in \(O(\degi(v))\)~time as described for \cref{alg:hindkE} and by using counting sort to find the $k$~parts to extract from in \cref{lin:count}.
\end{proof}

\paragraph{Manipulation by dividing.}

Recall that the dividing operation splits a merged article into two arbitrary
parts. First we consider the basic and the conservative case and show that they
are FPT when parameterized by the \hind{}~$h$. Then we show that the
cautious variant is W[1]-hard when parameterized by~$k$.
\hindS{}$(\mu)$ is closely related to 
\hindM$(\mu)$~\citep{BKNSW16,KA13} which is, given a citation graph~$D =
(V, A)$, a subset of articles~$W \subseteq V$, and a non-negative integer~$h$,
to decide whether there is a partition~$\merge$ of~$W$ such that~$\merge$ has
\hind{}~$h$ with respect to~$\mu$. De Keijzer and Apt~\citep{KA13} showed that
\hindM{}$(\scites)$ is NP-hard, even if merges are unconstrained. The NP-hardness of \hindM{}
for~$\mu \in \{\ucites, \mcites\}$ follows. We can reduce \hindM{} to
\hindlS{} by defining the partition~$\merge = \{W\}$, hence we get the following.

\begin{proposition}
\label{prop:hindSNPh}
\hindS{} and \hindlS{} are NP-hard for~$\mu\in\{\scites,\ucites,\mcites\}$.
\end{proposition}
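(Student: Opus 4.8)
The plan is to establish NP-hardness of \hindS{} and \hindlS{} for all three measures by a reduction from \hindM{}, exploiting the fact that the manipulation-by-merging problem is already known to be NP-hard. The key observation is that starting from the fully-merged partition~$\merge = \{W\}$, in which all of the author's articles have been merged into a single article, the dividing operation can reproduce \emph{any} partition of~$W$ whatsoever. Indeed, any partition can be obtained from~$\{W\}$ by a sequence of divisions, and the set of partitions~$\unmerge$ refining~$\{W\}$ (condition~(i) of \hindS{}, namely that each~$\unmergelt\in\unmerge$ satisfies~$\unmergelt\subseteq\mergelt$ for some~$\mergelt\in\merge$) is exactly the set of \emph{all} partitions of~$W$, since there is only the one part~$W$ and every subset is contained in it. Therefore the decision ``is there a partition~$\unmerge$ refining~$\{W\}$ with \hind{} at least~$h$'' coincides verbatim with the \hindM{} decision ``is there a partition~$\merge$ of~$W$ with \hind{} at least~$h$.''

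First I would recall from de~Keijzer and Apt~\cite{KA13} that \hindM{}$(\scites)$ is NP-hard even without constraints on the merges, and note that NP-hardness for~$\mu\in\{\ucites,\mcites\}$ follows likewise (as already stated in the excerpt). Then, given an instance~$(D=(V,A),W,h)$ of \hindM{}$(\mu)$, I would construct the \hindS{}$(\mu)$ instance~$(D,W,\merge=\{W\},h)$ in linear time. The correctness argument is the equivalence spelled out above: a partition~$\unmerge$ is a feasible solution for \hindS{} on input~$\{W\}$ if and only if it is an arbitrary partition of~$W$, which is precisely a feasible merging for \hindM{}. Hence the constructed instance is a yes-instance of \hindS{} if and only if the original is a yes-instance of \hindM{}, giving NP-hardness of \hindS{}.

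For \hindlS{}, I would add the conservativity bound~$|\merge\setminus\unmerge|\le k$ and simply set~$k$ large enough to impose no restriction; since~$\merge=\{W\}$ contains only a single part, taking~$k:=1$ (or~$k:=|W|$) allows the single merged article~$W$ to be split into the desired~$\unmerge$ while leaving the bound vacuous, so the same equivalence goes through. The reduction remains polynomial (indeed linear) in the input size, and the target parameter~$h$ is preserved exactly, so this is a valid many-one polynomial reduction.

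The step requiring the most care is verifying that condition~(i) of \hindS{}, which formally constrains each part of~$\unmerge$ to lie inside some part of~$\merge$, really does collapse to ``any partition'' when~$\merge=\{W\}$ — this is the conceptual crux but is immediate once observed, since every subset of~$W$ is contained in the unique part~$W$. The only genuine subtlety is bookkeeping for the conservative variant, namely choosing~$k$ so that the bound~$|\merge\setminus\unmerge|\le k$ never blocks the reconstruction of the optimal~$\unmerge$; this is routine because splitting the lone part~$W$ changes at most one element of~$\merge$.
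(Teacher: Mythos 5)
Your proof is correct and takes essentially the same approach as the paper: both reduce from \hindM{} (NP-hard by de~Keijzer and Apt for $\scites$, with hardness carrying over to $\ucites$ and $\mcites$) by setting $\merge=\{W\}$, so that the refinement condition becomes vacuous and dividing can reproduce any partition of~$W$. Your write-up merely spells out in detail what the paper states in a single sentence, including the routine observation that the conservativity bound is vacuous when $\merge$ has only one part.
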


\noindent As to computational tractability, \hindS{} and \hindlS{} are FPT when
parameterized by~$h$---the \hind{} to achieve.

\begin{theorem}
\label{thm:hindS}
\hindS{} and \hindlS{}$(\mu)$ can be solved in $2^{O(h^4\log h)}\cdot
n^{O(1)}$ time, where~$h$ is the \hind{} to achieve
and~$\mu\in\{\scites,\ucites\}$.
\end{theorem}

\begin{algorithm}[t]
  \DontPrintSemicolon
  
  \SetKwFunction{Merge}{Merge}
  
  \KwIn{A citation graph~$D=(V,A)$, a set
  $W\subseteq V$ of articles, a partition~$\merge$ of~$W$,
  nonnegative integers~$h$ and~$k$, and a measure~$\mu$.}
  
  \KwOut{\true{} if $k$ dividing operations can be applied to $\merge$ to yield \hind{}~$h$ and \false{} otherwise.}
  
  \BlankLine
   
   %  	$\ell = 0$\;
  \ForEach{$\mergelt \in \merge$}
  {
    $D'\gets{}$ The graph obtained from~$D$ by removing all citations~$(u, v)$ such that~$v \notin P$ and adding $h+1$~articles~$r_1, \ldots, r_{h+1}$\nllabel{lin:hindm-inst1}\;
  	$W' \gets \mergelt$, $\ell_\mergelt \gets 0$\nllabel{lin:hindm-inst2}\;
  	
%  	\lIf{$\Merge{$D', G, W', h, \mu$}$} {\KwRet{\true}}

  	\For{$i\leftarrow 0$ \KwTo $h$} {
		\If{\Merge{$D', W', h, \mu$}} {
		$\ell_\mergelt \leftarrow h-i$\;
		Break\;
		}
                Add $r_i$ to $W'$ and 
                add each citation~$(r_i, r_j)$, $j \in \{1, \ldots, h + 1\} \setminus \{i\}$ to~$D'$\nllabel{lin:addartart}\;
  	}
  	
  }
  \KwRet{$\exists \merge' \subseteq \merge \text{ s.t. } |\merge'| \leq k 
  \text{ and } \sum_{\mergelt \in \merge'} \ell_\mergelt \geq h$}
  
  \caption{Conservative Dividing}
  \label{alg:hindlS}
\end{algorithm}
\begin{proof}
  The pseudocode is given in \cref{alg:hindlS}. Herein, \Merge{$D,W,h,\mu$} decides \hindM{}$(\mu)$, that is, it returns \true{} if there is a partition~$\mathcal{Q}$ of~$W$ such that~$\mathcal{Q}$ has \hind{}~$h$ and \false{} otherwise. It follows from \citet[Theorem~7]{BKNSW16} that \Merge\ can be carried out in $2^{O(h^4\log h)}\cdot n^{O(1)}$ time.

  \cref{alg:hindlS} first finds, using
  \Merge, the maximum number~$\ell_\mergelt$ of (merged) articles with
  at least~$h$ citations that we can create in each part~$\mergelt \in
  \merge$. For this, we first prepare an instance~$(D', W', h, \mu)$ of \hindM$(\mu)$ in
  \cref{lin:hindm-inst1,lin:hindm-inst2}. In the resulting instance, we ask whether there is a partition of~$P$ with \hind{}~$h$. If this is the case, then we set~$\ell_P$ to~$h$ and, otherwise, we add one artificial article with~$h$ citations to~$W'$ in \cref{lin:addartart}. Then we use \Merge\ again and we iterate this process until \Merge\ returns \true, or we find that there is not even one merged article contained in~$P$ with $h$~citations. Clearly, this process correctly computes~$\ell_P$. Thus, the algorithm is correct. The running time is clearly dominated by the calls to \Merge. Since \Merge\ runs in $2^{O(h^4\log h)}\cdot n^{O(1)}$ time~\cite[Theorem~7]{BKNSW16}, the running time bound follows.
\end{proof}

\noindent 
We note that \Merge\ can be modified so that it outputs the desired partition. Hence, we can modify \cref{alg:hindlS} to output the actual
solution.  Furthermore, for $k=n$, \cref{alg:hindlS} solves the non-conservative
variant, which is therefore also fixed-parameter tractable parameterized
by~\(h\).

 In contrast, for the cautious variant we show W[1]-hardness when parameterized by $k$, the number of allowed operations.

\begin{theorem}
\label{thm:hindkS}
\hindkS{}$(\mu)$ is NP-hard and W[1]-hard when parameterized by~$k$ 
for~$\mu\in\{\scites,\allowbreak \ucites, \mcites\}$, even if the citation graph is
acyclic.
\end{theorem}

\begin{proof}
We reduce from the \textsc{Unary Bin Packing} problem: given a set~$S$ of~$n$
items with integer sizes~$s_i$,~$i \in \{1, \ldots, n\}$, $\ell$~bins and a maximum bin
capacity~$B$, can we distribute all items into the~$\ell{}$ bins? Herein, all sizes are encoded in unary. \textsc{Unary Bin Packing} parameterized by~$\ell$ is W[1]-hard~\citep{jansen2013bin}.

Given an instance~$(S, \ell{}, B)$ of \textsc{Unary Bin Packing}, we produce an instance~$(D, W, \merge, h, \ell{}-1)$
of \hindkS$(\scites)$. Let~$s^* = \sum_i
s_i$ be the sum of all item sizes. We assume that~$B < s^*$ and~$\ell{}\cdot B \geq
s^*$ as, otherwise, the problem is trivial, since all items fit into one bin or
they collectively cannot fit into all bins, respectively. Furthermore, we assume that $\ell{} < B$
since, otherwise, the instance size is upper bounded by a function of $\ell{}$ and, hence, is trivially FPT with respect to~$\ell{}$. We construct the instance of \hindkS$(\scites)$ in polynomial time as follows.
\begin{itemize}
  \item Add~$s^*$ articles~$x_1,\ldots, x_{s^*}$ to~$D$. These are only used to
  increase the citation count of other articles.
  \smallskip
  \item Add one article~$a_i$ to~$D$ and~$W$ for each~$s_i$.
  \item For each article~$a_i$, add citations~$(x_j, a_i)$ for all~$1\leq j \leq
  s_i$ to~$G$. Note that, after adding these citations, each article~$a_i$ has citation count~$s_i$.
  \smallskip
  \item Add~$\Delta \coloneqq \ell{}\cdot B - s^*$ articles~$u_1, \ldots, u_\Delta$ to~$D$
  and~$W$.
  \item For each article~$u_i$ with $i\in\{1, \ldots, \Delta\}$, add an citation~$(x_1, u_i)$ to~$D$.  Note that each
  article~$u_i$ has citation count~$1$.
  \smallskip
  \item Add~$B-\ell{}$ articles~$h_1, \ldots, h_{B-\ell{}}$ to~$D$
  and~$W$.
  \item For each article~$h_i$ with $i\in\{1, \ldots, B-\ell\}$, add citations~$(x_j, h_i)$ for all~$1\leq j \leq B$
  to~$D$. Note that each article~$h_i$ has citation count~$B$.
  \smallskip
  \item Add~$\mergelt^* = \{a_1, \ldots, a_n, u_1, \ldots,
  u_\Delta\}$ to~$\merge$, for each article~$h_i$ with $i\in\{1, \ldots, B-\ell\}$, add~$\{h_i\}$ to~$\merge$, and set~$h = B$.
\end{itemize}
Now we show that~$(S, \ell{}, B)$ is a yes-instance if and only if~$(D, W, \merge, h,
\ell{}-1)$ is a yes-instance.

($\Rightarrow$) Assume that~$(S, \ell{}, B)$ is a yes-instance and let~$S_1, \ldots,
S_\ell{}$ be a partition of~$S$ such that items in~$S_i$ are placed in bin~$i$. Now
we split~$\mergelt^*$ into~$\ell{}$ parts~$\unmergelt_1, \ldots, \unmergelt_\ell{}$ in the
following way. Note that for each~$S_i$, we have that~$\sum_{s_j \in S_i} s_j =
B - \delta_i$ for some~$\delta_i \geq 0$. Furthermore, $\sum_i \delta_i = \Delta$. Recall
that there are~$\Delta$ articles~$u_1, \ldots, u_\Delta$ in~$\mergelt^*$.
Let~$\delta_{<i}=\sum_{j<i}\delta_j$ and~$U_i = \{
u_{\delta_{<i}+1},\ldots, u_{\delta_{<i}+\delta_i}\}$, with~$\delta_0 = 0$ and
if~$\delta_i > 0$, let~$U_i = \emptyset$ for $\delta_i = 0$. We set
$\unmergelt_i = \{a_j \ | \ s_j \in S_i \} \cup U_i$. Then for
each~$\unmergelt_i$, we have that \(\scites(\unmergelt_i) = \scites(\{a_j \ | \
s_j \in S_i \}) + \scites(U_i),\) which simplifies to
\(\scites(\unmergelt_i) = \sum_{s_j \in S_i} s_j + \delta_i = B.\)
For each $i$, $1 \leq i \leq n$, we have $\scites(\{h_i\}) = B$. Hence, $\unmerge = \{\unmergelt_1, \ldots,
\unmergelt_\ell{}, \{h_1\}, \ldots, \{h_{B-\ell{}}\}\}$ has \hind{}~$B$.

($\Leftarrow$) Assume that~$(D, W, \merge, h, \ell{}-1)$ is a yes-instance and
let~$\unmerge$ be a partition with \hind{}~$h$. Recall that~$\merge$ consists of~$\mergelt^*$
and~$B-\ell{}$ singletons~$\{h_1\}, \ldots, \{h_{B-\ell{}}\}$, which are hence also contained in~$\unmerge$. Furthermore,~$\scites(\{h_i\}) = B$ for each~$h_i$ and, by the definition of the
\hind{}, there are~$\ell{}$ parts~$\unmergelt_1, \ldots, \unmergelt_\ell{}$
with~$\unmergelt_i \subset \mergelt^*$ and~$\scites(\unmergelt_i) \geq B$
for each~$i$. Since, by definition, $\scites(\mergelt^*) = \ell{}\cdot B$
and~$\scites(\mergelt^*) = \sum_{1\leq i\leq \ell{}} \scites(\unmerge_i)$ we have
that~$\scites(\unmergelt_i) = B$ for all~$i$. It follows
that~$\scites(\unmergelt_i \setminus \{u_1, \ldots, u_\Delta\}) \leq B$ for all~$i$. This implies that packing into bin~$i$ each item in $\{s_j \ | \ a_j \in \unmergelt_i\}$ solves the 
instance~$(S, \ell{}, B)$.

Note that this proof can be modified to cover also the~$\ucites$ and
the~$\mcites$ case by adding~$\ell{}\cdot s^*$ extra~$x$-articles and ensuring that no two articles in $W$ are cited by the same $x$-article.
\end{proof}

\section{fusionCite}
We now consider the \(\mcites\) measure, which makes manipulation
considerably harder than the other measures. In particular,
we obtain that, even in the most basic case, the manipulation
problem is NP-hard.

\begin{theorem}\label{thm:aehard}\label{thm:hindANP}
    \hindA{}$(\mcites)$ and \hindE{}$(\mcites)$ are NP-hard, even if the citation graph is acyclic and \(s=3\), where \(s\)~is the largest number of articles merged into one.
\end{theorem}
\begin{figure}
  \centering
  \includegraphics{split-index-1}
  \caption{Illustration of the construction in the proof of \cref{thm:aehard} for a literal~\(\neg x_i\) contained in a clause~\(c_j\).}
  \label{fig:aehard}
\end{figure}
\begin{proof}
  We reduce from the NP-hard \textsc{3-Sat} problem: given a 3-CNF formula~$F$ with $n$~variables and $m$~clauses, decide whether $F$~has a satisfying truth assignment to its variables.  Without loss of generality, we assume $n+m>3$ and that each clause contains three literals over mutually distinct variables.  Given a formula~$F$ with variables~$x_1,\dots,x_n$ and clauses~$c_1,\dots,c_m$ such that $n+m>3$, we produce an instance~$(D,W,\merge,m+n)$ of \hindA{}$(\mcites)$ or \hindE{}$(\mcites)$ in polynomial time as follows.  The construction is illustrated in \cref{fig:aehard}.

  For each variable~$x_i$ of~$F$, add to~$D$ and~$W$ sets~$\mathcal X_i^F:=\{X_{i,1}^{F},X_{i,2}^F,X_{i,3}^F\}$ and~\(\mathcal X_i^T:=\{X_{i,1}^T,X_{i,2}^T,X_{i,3}^T\}\) of \emph{variable articles}.  Add \(\mathcal X_i^F\) and~\(\mathcal X_i^T\) to~\(\merge\).  Let \(h:=m+n\).  For each variable~\(x_i\), add
  \begin{enumerate}
  \item \(h-2\) citations from (newly-introduced) distinct atomic articles to
    \(X_{i,1}^T\) and \(X_{i,1}^F\),
  \item citations from \(X_{i,1}^F\) to \(X_{i,2}^T\) and from \(X_{i,2}^T\) to \(X_{i,3}^F\), and
  \item citations from \(X_{i,1}^T\) to \(X_{i,2}^F\) and from \(X_{i,2}^F\) to \(X_{i,3}^T\).
\end{enumerate}
Next, for each clause~$c_j$ of~$F$, add a \emph{clause article}~$C_j$ with \(h-4\) incoming citations to~$D$, to~$W$, and add~$\{C_j\}$ to~$\merge$.  Finally, if a positive literal~$x_i$ occurs in a clause~$c_j$, then add citations $(X_{i,\ell}^T,C_j)$ to~$D$ for~$\ell\in\{2,3\}$.  If a negative literal~$\neg x_i$ occurs in a clause~$c_j$, then add citations $(X_{i,\ell}^F,C_j)$ to~$D$ for~$\ell\in\{2,3\}$.  This concludes the construction.  Observe that $D$~is acyclic since all citations go from variable articles to clause articles or to variable articles with a higher index.  It remains to show that $F$~is satisfiable if and only if $(D,W,\merge,h)$~is a yes-instance.

  ($\Rightarrow$) If $F$~is satisfiable, then a solution~$\unmerge$ for $(D,W,\merge,h)$~looks as follows: for each $i \in \{1, \ldots, n\}$, if $x_i$~is true, then we put $X_i^F\in\unmerge$ and we put $\mathcal X_i^T\in\unmerge$ otherwise.  All other articles of~$D$ are added to~$\unmerge$ as singletons.  We count the citations that every part of~$\unmerge$ gets from other parts of~$\unmerge$.  If $x_i$~is true, then $\mathcal X_i^F$~gets two citations from $\{X_{i,\ell}^T\}$ for~$\ell\in\{1,2\}$ and the \(h-2\) initially added citations.  Moreover, for the clause~$c_j$ containing the literal~$x_i$, $\{C_j\}$~gets two citations from $\{X_{i,\ell}^T\}$ for~$\ell\in\{2,3\}$, at least two citations from variable articles for two other literals it contains, and the \(h-4\) initially added citations.  Symmetrically, if $x_i$~is false, then $\{\mathcal X_i^T\}$ gets $h$~citations and so does every~$\{C_j\}$ for each clause~$c_j$ containing the literal~$\neg x_i$.  Since every clause is satisfied and every variable is either true or false, it follows that each of the $m$~clause articles gets $h$~citations and that, for each of the $n$~variables $x_i$, either $\mathcal X_i^F$ or~$\mathcal X_i^T$ gets $h$~citations.  It follows that $h=m+n$~parts of~$\unmerge$ get at least $h$~citations and thus, that $\unmerge$ has \hind{} at least~$h$.

  (\(\Leftarrow\)) Let $\unmerge$~be a solution for $(D,W,\merge,m+n)$.  We first show that, for each variable~$x_i$, we have either $\mathcal X_i^T\in\unmerge$ or $\mathcal X_i^F\in\unmerge$.  To this end, it is important to note two facts:
  \begin{enumerate}
  \item For each variable~$x_i$, $\mathcal X_i^T$ contains two atomic articles with one incoming arc in~\(D\) and one with \(h-2\)~incoming arcs.  Thus, no subset of \(\mathcal X_i^T\) can get $h$~citations.  The same holds for \(\mathcal X_i^F\).
  \item If, for some variable~$x_i$, the part~$\mathcal X_i^T\in\unmerge$ gets $h$~citations, then ${\cal X}_i^F\notin\unmerge$ and vice versa.
  \end{enumerate}
  Thus, since there are at most $m$~clause articles and $\unmerge$~contains $h=m+n$~parts with $h$~citations, $\unmerge$~contains exactly one of the parts~$\mathcal X_i^T,\mathcal X_i^F$ of each variable~$x_i$.  It follows that, in $\unmerge$, all singleton clause articles have to receive $h$~citations.  Each such article gets at most \(h-4\)~initially added citations and citations from at most three sets \(\mathcal X_i^T\) or \(\mathcal X_i^F\) for some variable~\(x_i\).  Thus, for each clause~$c_j$, there is a literal~$x_i$ in~$c_j$ or a literal~$\neg x_i$ in~$c_j$ such that $\mathcal X_i^T\notin\unmerge$ or $\mathcal X_i^F\notin\unmerge$, respectively.  It follows that setting each $x_i$~to true if and only if $\mathcal X_i^T\notin\unmerge$ gives a satisfying truth assignment to the variables of~$F$.
\end{proof}

\noindent This NP-hardness result motivates the search for fixed-parameter tractability.
\begin{theorem}
\label{thm:hindAmcites}
\hindA{}$(\mcites)$ can be solved in $O(4^{h^2}(n + m))$ time, where
$h$~is the \hind{} to achieve.
\end{theorem}

\begin{proof}
We use the following procedure to solve an instance~$(D,W,\merge,h)$ of \hindA{}$(\mcites)$. 

Let $\merge_{\geq h}$~be the set of merged articles~$\mergelt\in\merge$ with $\mcites(\mergelt)\geq h$.  If $|\merge_{\geq h}|\geq h$, then we face a yes-instance and output ``yes''. To see that we can do this in linear time, note that, given~$\merge$, we can compute~$\mcites(P)$ in linear time for each~$P \in \merge$. Below we assume that $|\merge_{\geq h}|<h$.

 First, we atomize all $\mergelt\in\merge$ that cannot have $h$~or more citations, that is, for which, even if we atomize all merged articles except for $\mergelt$, we have $\mcites(\mergelt) < h$.  Formally, we atomize~\(\mergelt\) if
\(
\sum_{v\in\mergelt}|\Ni_{D-\mergelt}(v)|<h.
\)
Let $\merge'$ be the partition obtained from $\merge$ after these atomizing
operations; note that $\merge'$~can be computed in linear time.

 The basic idea is now to look at all remaining merged articles that receive at least~$h$ citations from atomic articles; they form the set~$\merge_{<h}$ below. They are cited by at most~$h - 1$ other merged articles. Hence, if the size of $\merge_{<h}$ exceeds some function~$f(h)$, then, among the contained merged articles, we find a large number of merged articles that do not cite each other. If we have such a set, then we can atomize all other articles, obtaining \hind{}~$h$. If the size of $\merge_{<h}$ is smaller than $f(h)$, then we can determine by brute force whether there is a solution. 

Consider all merged articles $\mergelt\in\merge'$ that have less than~$h$
citations but can obtain $h$ or more citations by applying atomizing operations
to merged articles in $\merge'$. Let us call the set of
these merged articles $\merge_{<h}$. Formally, $\mergelt\in\merge_{<h}$ if
%\[
$\sum_{v\in\mergelt}|\Ni_{D-\mergelt}(v)| \geq h$ and $\mcites(P) < h$.
%\]
Again, $\merge_{<h}$ can be computed in linear time. Note that
$\merge'\setminus(\merge_{\geq h}\cup\merge_{<h})$ consists only of
singletons.

Now, we observe the following. If there is a set~$\merge^*\subseteq\merge_{<h}$ of at least \(h\)~merged articles such that, for all $\mergelt_i, \mergelt_j\in\merge^*$, neither $\mergelt_i$ cites $\mergelt_j$ nor $\mergelt_j$ cites $\mergelt_i$, then we can atomize all merged articles in~$\merge'\setminus\merge^*$ to reach an \hind{} of at least~$h$.  We finish the proof by showing that we can conclude the existence of the set~\(\merge^*\) if \(\merge_{<h}\) is sufficiently large and solve the problem using brute force otherwise.

Consider the undirected graph $G$ that has a vertex $v_\mergelt$ for each~$\mergelt\in\merge_{<h}$ and an edge between $v_{\mergelt_i}$ and $v_{\mergelt_j}$ if $\mergelt_i$ cites~$\mergelt_j$ or $\mergelt_j$ cites~$\mergelt_i$. Note that $\{v_\mergelt\mid\mergelt\in\merge^*\}$ forms an independent set in $G$. Furthermore, let $I$ be an independent set in $G$ that has size at least $h$. Let $\merge^{**} = \{\mergelt\in\merge_{<h}\mid v_\mergelt \in I\}$. Then, we can atomize all merged articles in $\merge'\setminus\merge^{**}$ to reach an \hind{} of at least $h$.
 
We claim that the number of edges in $G$ is at most $(h-1)\cdot |\merge_{<h}|$. This is because the edge set of~$G$ can be enumerated by enumerating for every vertex~$v_\mergelt$ the edges incident with~$v_\mergelt$ that result from a citation of~$\mergelt$ from another $\mergelt'\in\merge_{<h}$. The citations for each~$\mergelt$ are less than~$h$ as, otherwise, we would have that $\mergelt\in\merge_{\geq h}$. Now, we can make use of Tur\'{a}n's Theorem, which can be stated as follows: If a graph with $\ell$ vertices has at most $\ell k/2$~edges, then it admits an independent set of size at least~$\ell/(k+1)$~\cite[Exercise 4.8]{Juk01}. 
Hence, if $|\merge_{<h}| \geq 2h^2-h$, then we
face a yes-instance and we can find a solution by taking an arbitrary
subset~$\merge'_{<h}$ of $\merge_{<h}$ with $|\merge'_{<h}| = 2h^2-h$, by atomizing
every merged article outside of~$\merge'_{<h}$, and by guessing which merged articles we need to atomize inside of~$\merge'_{<h}$.
If $|\merge_{<h}| < 2h^2-h$, then we guess which merged articles in
$\merge_{<h}\cup\merge_{\geq h}$ we need to atomize to obtain a solution if it
exists. In both cases, for each guess we need linear time to determine
whether we have found a solution, giving the overall running time of
$O(4^{h^2} \cdot (m + n))$.
\end{proof}

\noindent For the conservative variant, however, we cannot achieve FPT, even if we add the number of atomization operations and the maximum size of a merged article to the parameter.
\begin{theorem}\label{thm:hindkAW1}
  \hindlA{}$(\mcites)$ is NP-hard and W[1]-hard when parameterized by $h+k+s$, where
  $s:=\max_{\mergelt \in \merge}|\mergelt|$, even if the citation graph is acyclic.
\end{theorem}
\begin{proof}
  We reduce from the \textsc{Clique} problem: given a graph~$G$ and an integer~$k$, decide whether $G$~contains a clique on at least $k$~vertices.  \textsc{Clique} parameterized by~$k$ is known to be W[1]-hard.  

Given an instance~$(G,k)$ of \textsc{Clique}, we produce an instance~$(D,W,\merge,h,k)$ of \hindlA{}$(\mcites)$ in polynomial time as follows.  Without loss of generality, we assume~$k\geq 4$ so that ${k \choose 2}\geq 4$. For each vertex~$v$ of~$G$, introduce a set~$R_v$ of $\lceil {k\choose 2}/2\rceil$~vertices to~$D$ and~$W$ and add~$R_v$ as a part to~$\merge$.  For an edge~$\{v,w\}$ of~$G$, add to~$D$ and~$W$ a vertex~$e_{\{v,w\}}$ and add $\{e_{\{v,w\}}\}$ to~$\merge$.  Moreover, add a citation from each vertex in~$R_v\cup R_w$ to $e_{\{v,w\}}$.  Finally, set $h:={k\choose 2}$.  Each of $h$, $k$ and~$s$ in our constructed instance of \hindlA{}$(\mcites)$ depends only on~$k$ in the input \textsc{Clique} instance.  It remains to show that $(G,k)$~is a yes-instance for \textsc{Clique} if and only if $(D,W,\merge,h,k)$~is.

($\Rightarrow$) Assume that $(G,k)$~is a yes-instance and let $S$~be a clique in~$G$.  Then, atomizing~$R_v$ for each~$v\in S$ yields ${k\choose 2}$~articles with at least ${k\choose 2}$~citations in~$D$: for each of the ${k\choose 2}$~pairs of vertices~$v,w\in S$, the vertex~$e_{\{v,w\}}$ gets $\lceil {k\choose 2}/2\rceil$~citations from the vertices in~$R_v$ and the same number of citations from the vertices in~$R_w$ and, thus, at least ${k\choose 2}$~citations in total.

($\Leftarrow$) Assume that $(D,W,\merge,h,k)$~is a yes-instance and let $\unmerge$~be a solution.  We construct a subgraph~$S=(V_S,E_S)$ of~$G$ that is a clique of size~$k$.  Let $V_S:=\{v\in V(G)\mid R_v\in\merge\setminus\unmerge\}$ and $E_S:=\{\{v,w\}\in E(G)\mid \{v,w\}\subseteq V_S\}$, that is, $S=G[V_S]$. Obviously,  $|V_S|\leq k$. It remains to show $|E_S|\geq{k\choose 2}$, which implies both that $|V_S|=k$ and that $S$~is a clique.  To this end, observe that the only vertices with incoming citations in~$D$ are the vertices~$e_{\{v,w\}}$ for the edges~$\{v,w\}$ of~$G$.  The only citations of a vertex~$e_{\{v,w\}}$ are from the parts~$R_v$ and~$R_w$ in~$\merge$.  That is, with respect to the partition~$\merge$, each vertex~$e_{\{v,w\}}$ has two citations.  Since the \hind{}~$h$ to reach is ${k\choose 2}$, at least ${k\choose 2}$~vertices $e_{\{v,w\}}$ have to receive ${k\choose 2}\geq 4$~citations, which is only possible by atomizing both~$R_v$ and~$R_w$.  That is, for at least ${k\choose 2}$~vertices $e_{\{v,w\}}$, we have $\{R_v,R_w\}\subseteq\merge\setminus\unmerge$ and, thus, $v,w\subseteq V_S$ and $\{v,w\}\in E_S$.  It follows that $|E_S|\geq{k\choose 2}$.
\end{proof}
\noindent The reduction given above easily yields the same hardness result for most other problem variants: a vertex~$e_{\{v,w\}}$ receives a sufficient number of citations only if~$R_v$ and~$R_w$ are atomized. Hence, even if we allow  extractions or divisions on~$R_v$, it helps only if we extract or split off all articles in~$R_v$. The only difference is that the number of allowed operations is set to~$k\cdot (\lceil {k\choose 2}/2 -1)\rceil$ for these two problem variants. By the same argument, we obtain hardness for the conservative variants.
\begin{corollary}
\label{cor:mcites}
 For $\mu = \mcites$, 
  \hindlE{}$(\mu)$, \hindkE{}$(\mu)$,
  \hindlS{}$(\mu)$, and~\hindkS{}$(\mu)$ are NP-hard and W[1]-hard when
  parameterized by $h+k+s$, where $s:=\max_{\mergelt \in \merge}|\mergelt|$, even if the citation graph is acyclic.
\end{corollary}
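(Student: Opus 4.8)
The plan is to reuse the \textsc{Clique} reduction from the proof of Theorem~\ref{thm:hindkAW1} essentially unchanged, and to verify that the four remaining operations behave like atomizing on the constructed instance. Recall that the instance has, for each vertex~$v$, a merged article~$R_v$ of size $\lceil{k\choose 2}/2\rceil$, and, for each edge~$\{v,w\}$, a singleton article~$e_{\{v,w\}}$ cited by every article of~$R_v\cup R_w$; the target is $h={k\choose 2}$. Under~$\mcites$, a merged part contributes at most one citation to~$e_{\{v,w\}}$, so with respect to~$\merge$ each~$e_{\{v,w\}}$ has exactly two citations, and the number of citations~$e_{\{v,w\}}$ receives from the pieces of~$R_v$ (resp.\ $R_w$) equals the number of those pieces. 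Since only the $e$-articles can receive citations, this is the only way to create parts with many citations.

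The key structural observation I would establish is that~$e_{\{v,w\}}$ can reach $\mcites$-value at least $h={k\choose 2}$ only if both~$R_v$ and~$R_w$ are changed, and in fact split almost completely into singletons. Indeed, the maximum possible value of~$e_{\{v,w\}}$ is $|R_v|+|R_w|=2\lceil{k\choose 2}/2\rceil$, which exceeds ${k\choose 2}$ by at most one; hence reaching~$h$ forces all but at most one of the $|R_v|+|R_w|$ articles to lie in distinct parts. Because $k\geq 4$ gives ${k\choose 2}\geq 6$, leaving either~$R_v$ or~$R_w$ entirely merged (contributing a single citation) falls short of the threshold, so both merged articles must be touched. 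This is exactly the property used in the backward direction of Theorem~\ref{thm:hindkAW1}, and it does not depend on which splitting operation is used.

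For the conservative variants \hindlE{} and \hindlS{} I keep the budget~$k$ on the number $|\merge\setminus\unmerge|$ of changed merged articles. Since ``changed'' coincides with ``touched'', both directions of the correctness proof transfer verbatim: a $k$-clique~$S$ yields a solution by splitting~$R_v$ into singletons for every~$v\in S$ (using extractions or divisions), and conversely the set $V_S:=\{v\mid R_v\in\merge\setminus\unmerge\}$ has size at most~$k$ and, by the observation above, contains both endpoints of at least ${k\choose 2}$ edges, forcing a $k$-clique. For the cautious variants \hindkE{} and \hindkS{} I instead bound the number of operations by $k\cdot(\lceil{k\choose 2}/2\rceil-1)$, since splitting one~$R_v$ into all its singletons costs $|R_v|-1$ extractions, and likewise $|R_v|-1$ divisions (each division raises the part count by one). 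This budget allows at most~$k$ full atomizations, so again at most~$k$ vertex-sets can be fully separated, recovering the clique argument. In every case $h$, $k$, and $s:=\max_{\mergelt\in\merge}|\mergelt|=\lceil{k\choose 2}/2\rceil$ are bounded by functions of the \textsc{Clique} parameter~$k$ alone, the construction is polynomial and acyclic, so we obtain NP-hardness together with W[1]-hardness for the parameter $h+k+s$.

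The main obstacle I anticipate is pinning down that no partial splitting provides a shortcut. Concretely, I must argue that touching a merged article without (almost) fully atomizing it cannot raise any $e$-article to the threshold, which rests on the tight inequality $2\lceil{k\choose 2}/2\rceil\leq{k\choose 2}+1$ and on $\lceil{k\choose 2}/2\rceil+1<{k\choose 2}$ for ${k\choose 2}\geq 4$; and that, under the cautious operation budget, spreading operations over more than~$k$ vertex-sets necessarily leaves all of them too incompletely split to help. Both amount to careful but elementary counting once the single-citation rule of~$\mcites$ between merged parts is used.
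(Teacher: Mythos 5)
Your proposal is correct and takes essentially the same route as the paper, which likewise proves the corollary by reusing the \textsc{Clique} reduction of Theorem~\ref{thm:hindkAW1}, observing that an article~$e_{\{v,w\}}$ can reach ${k\choose 2}$ citations under $\mcites$ only if both $R_v$ and $R_w$ are (almost) fully split, keeping the bound~$k$ for the conservative variants, and setting the cautious operation budget to $k\cdot(\lceil{k\choose 2}/2\rceil-1)$. If anything, you are more careful than the paper's one-paragraph derivation: your explicit handling of the parity issue (when ${k\choose 2}$ is odd, one non-singleton part of an $R$-set may survive) and the ensuing counting for the cautious budget address a subtlety that the paper glosses over with ``it helps only if we extract or split off all articles in~$R_v$,'' and that counting does indeed go through.
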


\section{Computational experiments}
To assess how much the \hind{} of a researcher can be manipulated by
splitting articles in practice, we performed computational experiments with data
extracted from Google Scholar.

\paragraph{Description of the data.} We use three data sets collected 
by
\citet{BKNSW16}. 
One data set consists of 22~selected authors of the conference IJCAI'13. The
selection of these authors was biased to obtain profiles of authors in
their early career. More precisely, the selected authors have a Google
Scholar profile, an \hind{} between 8 and~20, between 100 and 1000
citations, and have been active between 5 and 10 years when the data was collected. Below we refer to this dataset by \dsijcai. The other two
data sets contain Google Scholar data of `AI's 10 to Watch', a list of
young accomplished researchers in AI % that is
compiled %every two years
by \textit{IEEE Intelligent Systems}. One data set contains five
profiles from the 2011 edition (\dsaia), the other eight profiles from the 2013
edition of the list (\dsaib). In comparison to \citet{BKNSW16} we removed one author from the \dsaib\ data set because the data were inconsistent. All data were gathered between November 2014 and
January 2015. For an overview on the data, see
\cref{tab:data-sets}.
\begin{table}[t]\centering
  \caption{Properties of the three data sets. Here, $p$~is the number of profiles for each data set, $\overline{|W|}$~is the average number of atomic articles, $\overline{c}$~is the average number of citations, $\overline{h}$~is the average \hind{} in the data set (without merges), and \(h/a\)~is the average \hind{} increase per year; the `$\mathrm{max}$' subscript denotes the maximum of these values.}
  \begin{tabular}[t]{@{}rrrrrrrrrr@{}}
\toprule
 & \multicolumn{1}{c}{$p$} & \multicolumn{1}{c}{$\overline{|W|}$}  & \multicolumn{1}{c}{$|W|_{\mathrm{max}}$} & \multicolumn{1}{c}{$\overline{c}$}  & \multicolumn{1}{c}{$c_{\mathrm{max}}$} & \multicolumn{1}{c}{$\overline{h}$}  & \multicolumn{1}{c}{$h_{\mathrm{max}}$} &\(h/a\) \\ \midrule

% AI's 10 To Watch 2011
\dsaia    & 5 & 170.2 & 234 & 1614.2 & 3725 & 34.8 & 46 & 2.53\\
% AI's 10 To Watch 2013
\dsaib    & 7 & 58.7 & 144 & 557.5 & 1646 & 14.7 & 26 & 1.57 \\
% IJCAI 2013
\dsijcai  & 22 & 45.9 & 98 & 251.5 & 547 & 10.4 & 16 & 1.24 \\ \bottomrule
\end{tabular}
\label{tab:data-sets}
\end{table}

Due to difficulties in obtaining the data from Google Scholar,
\citet{BKNSW16} did not gather the concrete set of citations for
articles which are cited a large number of times. These were articles
that will always be part of the articles counted in the \hind. They
subsequently ignored these articles since it is never beneficial to
merge them with other articles in order to increase the \hind. In our
case, although such articles may be merged initially, they will also
always be counted in the \hind\ and hence their concrete set of
citations is not relevant for us as well. The information about
whether such articles could be merged is indeed contained in the
datasets.

\paragraph{Generation of profiles with merged articles.} In our
setting, the input consists of a profile which already contains some
merged articles. The merges should be performed in a way which
reflects the purpose of merging in the Google Scholar interface. That
is, the merged articles should roughly correspond to different
versions of the same work. To find different versions of the same
work, we used the compatibility graphs for each profile provided by
\citet{BKNSW16} which they generated as follows. The set of vertices
is the set of articles in the profile. For each article~$u$ let $T(u)$
denote the set of words in its title. There is an edge between
articles~$u$ and~$v$ if~$|T(u)\cap T(v)|\ge t\cdot |T(u)\cup T(v)|$,
where~$t\in [0,1]$ is the \emph{compatibility threshold}. For $t=0$,
the compatibility graph is a clique; for~$t=1$ only articles with the
same words in the title are adjacent. For~$t\le 0.3$, very dissimilar
articles are still considered compatible~\citep{BKNSW16}. Hence, we
usually focus on $t \ge 0.4$ below.

We then generated the merged articles as follows. We used four different
methods so that we can avoid artifacts that could be introduced by one
specific method. Each method iteratively computes an inclusion-wise
maximal clique~$C$ in the compatibility graph~$D$, adds it as a merged
article to the profile, and then removes~$C$ from~$D$. The clique~$C$
herein is computed as follows.
\begin{enumerate}
\item[\mmgreedymax] Recursively include into~$C$ a largest-degree vertex
  which is adjacent to all vertices already included until no such
  vertex exists anymore.
\item[\mmgreedymin] Recursively include into~$C$ a smallest-degree
  vertex which is adjacent to all vertices already included until no
  such vertex exists anymore.
\item[\mmmax] A maximum-size clique.
\item[\mmramsey] A recursive search of a maximal clique in the
  neighborhood of a vertex~$v$ and the remaining graph. See algorithm
  \textsf{Clique Removal} by \citet{boppana_approximating_1992} for
  details.
\end{enumerate}
If the compatibility graph has no edge anymore, then each method adds
all remaining articles as atomic articles of the profile.

\begin{figure}[!p]
  \centering
  \include{initial_hindex_distributions}
  \caption{\Hind\ distributions of the profiles with generated merged
    articles in comparison to the profiles without any merged articles.}
  \label{fig:initial-hindices}
\end{figure}

\cref{fig:initial-hindices} shows the distributions of the \hinds\ of
the generated profiles with merged articles and those where no article
has been merged. The lower edge of a box is the first quartile, the
upper edge the third quartile, and the thick bar is the median; the
remaining data points are shown by dots. Note that when no article is
merged---and no atomic article cites itself---all three citation
measures coincide. Often, merging compatible articles leads to a
decline in \hind\ in our datasets and this effect is most pronounced
for the more senior authors (in \dsaia). In contrast, merging very
closely related articles (compatibility threshold $t = 0.9$) for
authors in \dsaib\ led to increased \hinds. The initial \hinds\ are
very weakly affected by the different methods for generating initially
merged articles.

\paragraph{Implementation.} We implemented
\cref{alg:hindkA,alg:hindkE,alg:hindlE}---the exact, linear-time algorithms from \cref{sec:sucites} for \hindlA{}, \hindlE{}, and \hindkE{}, respectively, each for all three citation measures, $\scites$, $\ucites$, and $\mcites$.
The algorithms for $\scites$ and $\ucites$ were implemented directly as described.
For $\mcites$, we implemented minor modifications of the described algorithms---to make the algorithms well-defined, whenever we apply $\mcites$, we need to specify which article partition is used, that is, which articles are currently merged, but otherwise the basic algorithms are unchanged.
More specifically, recall that for \cref{alg:hindkA} we greedily perform atomizing operations if they increase the \hind.
Thus, in the adaption to $\mcites$, the partition $\mergelt$ is continuously updated whenever the check in \cref{lin:recfin} is positive, and the application of $\mu = \mcites$ in that line uses the partition~$\mergelt$ which is current at the time of application.
Similarly, the partitions are updated after positive checks in \cref{alg:hindkE}, \cref{lin:check1}, and in \cref{alg:hindlE}, \cref{lin:check2}, and the application of $\mu = \mcites$ in that line uses the current partition $\mergelt$.

Using the algorithms, we computed \hind{} increases under the respective restrictions.
For $\scites$ and $\ucites$ these algorithms yield the maximum-possible \hind{} increases by \cref{thm:hindA,thm:hindE}.
For $\mcites$, we only obtain a lower bound.

The implementation is in Python 3.6.7 under Ubuntu Linux 18.04 and the
source code is freely available.\footnote{See \url{http://fpt.akt.tu-berlin.de/hindexexp}.} In total,
137,626 instances of the decision problems were generated. Using an
Intel Core i5-7200U CPU with 2.5\,GHz and 8\,GB RAM, the instances
could be solved within fourteen hours altogether (ca.\ 350\,ms average
time per instance).

\begin{figure}[!p]
  \centering
  \include{change_by_merging}
  \caption{Number of profiles whose \hinds\ may be increased by unmerging.}
  \label{fig:potential}
\end{figure}

\paragraph{Authors with Potential for Manipulation.}\cref{fig:potential} gives the number of profiles in which the
\hind\ can be increased by unmerging articles. We say the profiles or
the corresponding authors have \emph{potential}. Concordant with
intuition, for each threshold value, the methods for creating initial
merges are roughly ordered according to the number of authors with
potential as follows:
$\mmmax > \mmramsey > \mmgreedymax > \mmgreedymin$. \mmgreedymax\ and
\mmgreedymin\ are surprisingly close. However, the differences between
the methods in general are rather small, indicating that the property
of having potential is inherent to the profile rather than the method
for generating initial merges. Since \mmgreedymax\ is one of the most
straightforward of the four, 
we will focus only on
\mmgreedymax\ below.

At first glance, we could expect that the number of authors with
potential decrease monotonously with increasing compatibility
threshold: Note that, for increasing compatibility threshold the edge
sets in compatibility graphs are decreasing in the subset order. Hence
each maximal clique in the compatibility graph can only increase in
size. However, since we employ heuristics to find the set of initial
merges (in the case of \mmramsey, \mmgreedymax, and \mmgreedymin) and
since there may be multiple choices for a maximum-size clique (for
\mmmax), different possible partitionings into initial merges may
result. This can lead to the fact that the authors with potential do
not decrease monotonously with increasing compatibility threshold.

Furthermore, with the same initial merges it can happen that an
increase in the \hind\ value through unmerging with respect to $\scites$ is
possible and no increase is possible with respect to $\ucites$ and
vice versa. The first may happen, for example, if two articles $v, w$
are merged such that $\scites(\{v, w\})$ is above but
$\ucites(\{v, w\})$ is below the \hind\ threshold. The second may
happen if the \hind\ of the merged profile is lower for $\ucites$
compared to that for $\scites$. Then, unmerging articles may yield
atomic articles which are still above the \hind\ threshold for
$\ucites$ but not for~$\scites$. As can be seen from
\cref{fig:potential}, both options occur in our data set.

The fraction of authors with potential differs clearly between the
three data sets. The authors in \dsaia\ have already accumulated so
many citations that almost all have potential for each threshold up
to~$0.6$. Meanwhile, the authors with potential in \dsaib\ continually
drop for increasing threshold and this drop is even more pronounced
for \dsijcai. This may reflect the three levels of seniority
represented by the datasets.

There is no clear difference between the achievable \hinds\ when
comparing $\mcites$ to $\ucites$ and $\scites$: While there are
generally more authors with potential for each threshold for $\mcites$
in the \dsaia\ dataset, there are less authors with potential for the
\dsaib\ dataset, and a similar number of authors with potential for
the \dsijcai\ dataset.

Focusing on the most relevant threshold,~$0.4$, and the
$\ucites$-measure, which is used by Google Scholar~\citep{BKNSW16}, we
see that all authors (100\,\%) in \dsaia\ could potentially increase
their \hinds\ by unmerging, four authors (57\,\%) in \dsaib, and seven
(31\,\%) in \dsijcai. We next focus only on these authors with
potential and gauge to which extent manipulation is possible.

\begin{figure}[!t]
  \centering
  \include{vs-comp}
  \caption{\Hind\ increases for each compatibility
    threshold for authors with potential (note that these authors may
    be different for different threshold values). The increases are
    largest-possible for $\scites$ and $\ucites$.}
  \label{fig:threshold}
\end{figure}

\begin{figure}[!p]
  \centering
  \include{vs-k} 
  \caption{\Hind\ increases versus number of changed articles or
    allowed operations for authors with potential and compatibility
    threshold~$0.4$. The increases are largest-possible for $\scites$
    and $\ucites$.}
  \label{fig:operations}
\end{figure}

\paragraph{Extent and Cost of Possible Manipulation.}
\cref{fig:threshold} shows the largest achievable \hind\ increases for
the authors with potential in the three data sets: again, the lower
edge of a box is the first quartile, the upper edge the third
quartile, and the thick bar is the median; the remaining data points
are shown by dots.

In the majority of cases, drastic increases can only be achieved when the compatibility threshold is lower than~$0.4$.
Generally, the increases achieved for the $\mcites$ measure are slightly lower than for the other two but the median is at most by one smaller.
Because of the heuristic nature of our algorithms for $\mcites$, we cannot exclude the possibility that the largest possible increases for $\mcites$ are comparable to the other two measures.
In the most relevant regime of $\ucites$ and compatibility threshold~$t = 0.4$, the median \hind\ increases are 4 for the \dsaia\ authors, 1 for the \dsaib\ authors, and 2 for the \dsijcai\ authors.
Notably, there is an outlier in \dsijcai\ who can achieve an increase of~5.

\cref{fig:operations} shows the \hind\ increases that can be achieved
by changing a certain number of articles (in the rows containing the
conservative problem variants) or with a certain number of operations
(in the row containing the cautious problem variant) for compatibility
threshold~$0.4$. For the majority of \dsaib\ and \dsijcai\ authors we
can see that, if manipulation is possible, then the maximum \hind\
increase can be reached already by manipulating at most two articles
and performing at most two unmerges. The more senior authors in the
\dsaia\ dataset can still gain increased \hinds\ by manipulating four
articles and performing four unmerges. For the outlier in \dsijcai\
with \hind\ increase of~5, we see that there is one merged article
which contains many atomic articles with citations above her or his
unmanipulated \hind: With respect to an increasing number of operations,
we see a continuously increasing \hind\ for \hindkE\ compared to a
constant high increase for \hindlA.

Summarizing, our findings indicate that realistic profiles from
academically young authors can in the majority of cases not be
manipulated by unmerging articles. If they can, then in the majority
of cases the achievable increase in \hind\ is at most two.
Furthermore, our findings indicate that the increase can be obtained
by tampering with a small number of merged articles (at most two in
the majority of cases).

\section{Conclusion}
\label{sec:concl}
In summary, our theoretical results suggest that using $\mcites$ as a citation measure for merged articles makes manipulation by undoing merges harder.
From a practical point of view, our experimental results 
indicate that author profiles with surprisingly large \hind{} may be 
worth inspecting concerning potential manipulation.

 Regarding theory, we leave three main 
open questions concerning the computational complexity of 
\hindE{}$(\mcites)$, the parameterized complexity of \hindS{}$(\mcites)$, as
well as the parameterized complexity of \hindkS$(\scites / \ucites)$ with respect to $h$
(see \cref{tab:results}), as the most immediate challenges for future work. Also, finding hardness reductions that produce more realistic instances would be desirable. From the experimental side, evaluating the potentially possible \hind{} increase by splitting on real merged profiles would be interesting as well as experiments using $\mcites$ as a measure. 
Moreover, it makes sense to consider the manipulation of the 
\hind{} also in context with the simultaneous manipulation of other 
indices (e.g., Google's i10-index, see also \citet{EP16}) 
and to look for Pareto-optimal solutions. We suspect that our algorithms easily adapt to other indices.
In addition, it is natural to consider combining merging and splitting 
 in manipulation of author profiles.

\section*{Acknowledgements} The authors thank Clemens Hoffmann and
Kolja Stahl for their excellent support in implementing our algorithms
and performing and analyzing experiments.

Manuel Sorge's work was carried out while affiliated with Technische
Universität Berlin, Berlin, Germany, Ben-Gurion University of the
Negev, Beer-Sheva, Israel, and University of Warsaw, Warsaw, Poland.

Hendrik Molter was supported by the DFG, projects DAPA (NI 369/12) and MATE (NI 369/17).
Manuel Sorge was supported by the DFG, project DAPA (NI 369/12), by the People Programme (Marie Curie Actions) of the European Union's Seventh Framework Programme (FP7/2007-2013) under REA grant agreement number {631163.11}, the Israel Science Foundation (grant no.
551145/14), and the European Research Council (ERC) under the European Union’s Horizon 2020 research and innovation programme under grant agreement number 714704. 
Toby Walsh was supported by the European Research Council (ERC) under the European Union’s Horizon 2020 research and innovation programme under grant agreement number 670077.\\ \includegraphics[width=50px]{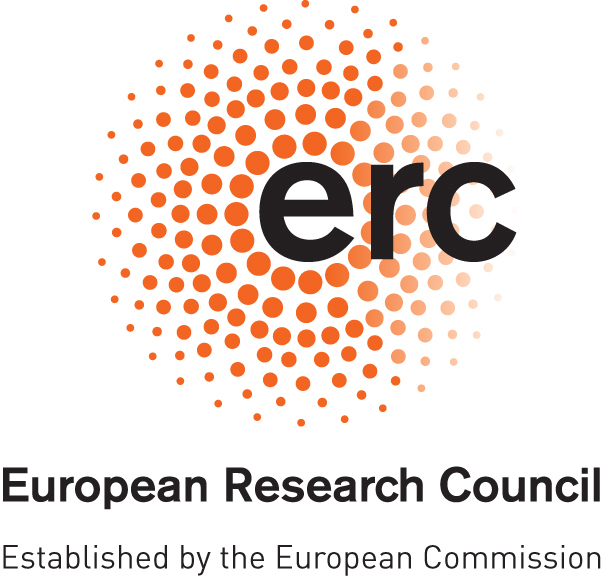}\hspace{.5cm}\includegraphics[width=50px]{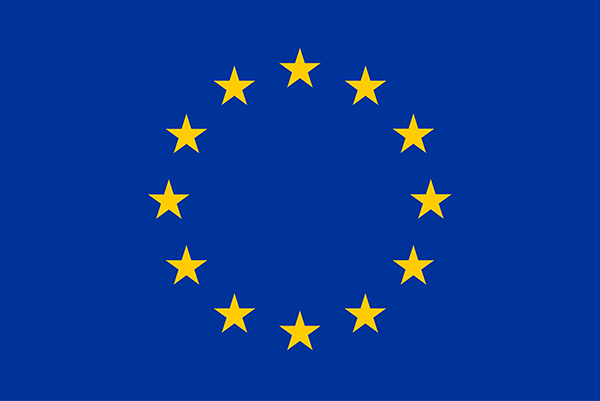}

%\newpage

%{\small\setlength{\bibsep}{0pt}
%\bibliographystyle{abbrvnat}
%\bibliographystyle{named}
\bibliographystyle{abbrvnat}
\bibliography{hindex}
%}

\end{document}